\definecolor{myurlcolor}{rgb}{0,0,0.7}
\definecolor{myrefcolor}{rgb}{0.8,0,0}
 \theoremstyle{plain}
 \theoremstyle{plain}
 \newtheorem{lem}{Lemma}
 \theoremstyle{plain}
 \theoremstyle{plain}
  \newtheorem{prop}{Proposition}
 \theoremstyle{plain}
 \theoremstyle{plain}
 \theoremstyle{plain}
 \theoremstyle{remark}
 \newtheorem*{rem*}{Remark}
 \theoremstyle{plain}
\theoremstyle{plain}
 \newtheorem*{conj*}{Conjecture}
 \theoremstyle{plain}
  \newtheorem*{def*}{Definition}
 \theoremstyle{plain}
\newcommand{\e}{\mathrm{e}}
\renewcommand{\exp}{\mathrm{exp}}
\DeclareMathOperator{\tr}{tr}
\renewcommand{\H}{\mathcal{H}}
\newcommand{\C}{\mathbb{C}} 
\newcommand{\M}{\mathbf{M}} 
\newcommand{\ev}[1]{\mathbb{E}\left[ #1 \right]} 
\newcommand{\I}{\mathbb{I}} 
\definecolor{darkgreen}{rgb}{0.3, 0.8, 0.0}
\newcommand{\rev}[1]{{ #1}}
\global\long\global\long\global\long\def\bra#1{\mbox{\ensuremath{\langle#1|}}}
\global\long\global\long\global\long\def\ket#1{\mbox{\ensuremath{|#1\rangle}}}
\renewcommand{\ket}[1]{\left| #1 \right>} 
\renewcommand{\bra}[1]{\left< #1 \right|} 
\begin{document} 	
\title{A Simple and  Efficient Joint Measurement Strategy for\\ Estimating Fermionic Observables and Hamiltonians}

\author{Joanna Majsak}
\email{joanna.e.majsak@gmail.com}

\affiliation{Center for Theoretical Physics, Polish Academy of Sciences, Al. Lotnik\'ow 32/46, 02-668 Warsaw, Poland}
\affiliation{Quantum Research Center, Technology Innovation Institute, Abu Dhabi, UAE}
\affiliation{Faculty of Physics, University of Warsaw, Pasteura 5, 02-093 Warsaw, Poland}

\author{Daniel McNulty}
\affiliation{Center for Theoretical Physics, Polish Academy of Sciences, Al. Lotnik\'ow 32/46, 02-668 Warsaw, Poland}
\affiliation{Dipartimento di Fisica, Università di Bari, I-70126 Bari, Italy}

\author{Micha\l\ Oszmaniec}
\affiliation{Center for Theoretical Physics, Polish Academy of Sciences, Al. Lotnik\'ow 32/46, 02-668 Warsaw, Poland}
\affiliation{NASK National Research Institute, Kolska 12, 01-045 Warsaw, Poland}

\begin{abstract}
We propose a simple scheme to estimate fermionic observables and Hamiltonians relevant in quantum chemistry and correlated fermionic systems. Our approach is based on implementing a measurement that jointly measures noisy versions of any product of two or four Majorana operators in an $N$ mode fermionic system. To realize our measurement we use: (i) a randomization over a set of unitaries that realize products of Majorana fermion operators; (ii) a unitary, sampled at random from a constant-size set of suitably chosen fermionic Gaussian unitaries; (iii) a measurement of fermionic occupation numbers; (iv) suitable post-processing.  Our scheme can estimate expectation values of all quadratic and quartic Majorana monomials to $\epsilon$ precision using $\mathcal{O}(N \log(N)/\epsilon^2)$ and $\mathcal{O}(N^2 \log(N)/\epsilon^2)$ measurement rounds respectively, matching the performance offered by fermionic \rev{classical shadows} \cite{Zhao21,Wan22}. In certain settings, such as a rectangular lattice of qubits which encode an $N$ mode fermionic system via the Jordan-Wigner transformation, our scheme can be implemented in circuit depth $\mathcal{O}(N^{1/2})$ with $\mathcal{O}(N^{3/2})$ two-qubit gates, offering an improvement over fermionic and matchgate classical shadows that require depth $\mathcal{O}(N)$ and $\mathcal{O}(N^2)$ two-qubit gates. By benchmarking our method on exemplary molecular Hamiltonians and observing performances comparable to fermionic classical shadows, we demonstrate a novel, competitive alternative to existing strategies.
\end{abstract}

\maketitle	

\section{Introduction}
 Some of the most promising tasks in which quantum computers can offer practical speedups over their classical counterparts can be found in the fields of quantum chemistry and simulation (see e.g. \cite{bauer20} for a review), where fermionic systems are of particular interest. While the ideal approach would involve simulating the quantum system on a (yet to be realized) universal quantum computer, a reasonable goal in the meantime is to find problems where smaller sized quantum devices can outperform current classical approaches, given the limitations on achievable gate depths and counts. To this end, much effort has been spent developing algorithms that combine both quantum and classical techniques to achieve speedups \cite{bharti22,preskill18,mcclean16,garcia21}. These can target, e.g., the simulation of time dynamics in quantum chemistry systems \cite{miessen_quantum_2023, babbush_quantum_2023}, as well as the problem of determining low energy states of molecular Hamiltonians. A large class of these algorithms, known as variational quantum algorithms \cite{mcclean16},  require an approximation of the ground state energy of a Hamiltonian, which involves estimating the Hamiltonian on a quantum device and optimizing the prepared state by classical methods.

As the system size increases, one of the main performance bottlenecks in the aforementioned tasks is the inability to simultaneously measure the relevant non-commuting observables, e.g. those which appear in the Hamiltonian's decomposition. Non-commutativity, more generally, hinders the ability to efficiently learn many properties of a quantum system \cite{gidofalvi07,obrien19,overy14,mcclean17,takeshita20,gluza18}. To reduce the measurement cost (and the number of state preparations) several strategies have been developed. A significant body of work attempts to efficiently find minimal groupings of the measurements into sub-classes of commuting observables \cite{jena19,gokhale19,yen20,verteletskyi20,crawford21,izmaylov20,Zhao20,bonet20}. 
Another approach, known as classical shadows \cite{huang20}, is based on a randomized measurement strategy that yields a classical approximation of the quantum state. The resulting approximation can be used to simultaneously estimate non-commuting observables in both many-body qubit \cite{huang20,hadfield20,hu23} and fermionic systems \cite{Zhao21,Wan22,Low22,ogorman22}, with state-of-the-art sample complexity scalings.
A relevant approach was also recently introduced in \cite{clinton_towards_2024}, where the authors provide a method to measure all quadratic and quartic fermionic  terms in the Jordan-Wigner encoding via quantum circuits with constant depth, assuming all-to-all qubit connectivity and noiseless multi-qubit projective measurements, with sample complexity comparable to fermionic classical shadows.

In this work we consider an alternative approach for estimating non-commuting fermionic observables by implementing a simple and efficient joint measurement, in analogy to the strategy for multi-qubit Pauli observables \cite{mcnulty22}. In particular, we construct a joint measurement of a modified (noisy) collection of Majorana monomials, whose outcome statistics can be reproduced from the (parent) measurement together with an efficient classical post-processing \cite{busch,heinosaari08,heinosaari15}. Our measurement scheme for an $N$ mode fermionic system involves sampling from two distinct subsets of unitaries followed by a measurement of the occupation numbers and a simple classical post-processing. The first subset of unitaries realizes products of Majorana fermion operators. The second subset (consisting of several fermionic Gaussian unitaries) rotates disjoint blocks of the Majorana operators into balanced superpositions. For the case of quadratic and quartic monomials we show that sets of two or nine fermionic Gaussian unitaries are sufficient, respectively, to jointly measure all noisy versions of the desired observables. We also describe a tailored joint measurement scheme for estimating energies of electronic structure Hamiltonians, where four fermionic Gaussian unitaries in the second subset are sufficient. As with classical shadows, the information from a single experiment in our scheme can be used to estimate multiple observables.

Our joint measurement strategy estimates expectation values of quadratic and quartic Majorana monomials to $\epsilon$ precision with $\mathcal{O}(N \log(N)/\epsilon^2)$ and $\mathcal{O}(N^2 \log(N)/\epsilon^2)$ measurement rounds, respectively, and provides the same performance guarantees as fermionic (matchgate) classical shadows \cite{Zhao21,Wan22}. Under the Jordan-Wigner transformation, in the setting of a rectangular lattice of qubits, the measurement circuit can be achieved with depth $\mathcal{O}(N^{1/2})$ and using $\mathcal{O}(N^{3/2})$ two-qubit gates. Fermionic and matchgate classical shadow circuits, on the other hand, require depth $\mathcal{O}(N)$ and $\mathcal{O}(N^2)$ two-qubit gates, as well as requiring a randomization over a large collection of unitaries \cite{Zhao21,Wan22}. Under standard fermion-to-qubit mappings, we estimate the expectation values of Majorana pairs and quadruples from single-qubit measurement outcomes of one and two qubits, respectively. Thus each estimate is affected only by errors on at most two qubits. As such, our strategy has the potential to be easily combined with randomized error mitigation techniques (see e.g. \cite{berg_2022}). We benchmark our strategy with several popular molecular Hamiltonians and find that the sample complexities are comparable to other state-of-the-art schemes \cite{huang20,Zhao21,Huggins21,hadfield20}.

\rev{We analyze the 2D rectangular lattice as both a simple theoretical model and with practical applications in mind. Many present-day superconducting quantum computing architectures are based on 2D lattices that can be easily mapped to rectangular ones. For example, IBM's hexagon-heavy lattices can simulate a rectangular lattice with constant overhead in gate depth and count \cite{hetenyi_creating_2024}, and a rectangular lattice can be embedded into the topology of the Google Sycamore processor}.

This work is accompanied by a complementary paper \cite{MCO2024}, which studies joint measurability of Majorana fermion operators from a more formal perspective, in particular by showing (asymptotic) optimality of our joint measurement scheme and establishing connections to graph theory, as well as topics in mathematical physics such as the SYK model.    

The rest of the paper is organized as follows. First, in Sec. \ref{sec:Setting} we introduce the necessary concepts regarding fermionic systems and quantum measurements. In Sec. \ref{sec:Genprotocol} we present the general protocol for estimating expectation values of products of Majorana operators. In Sec. \ref{sec:constructionOFunitaries} we provide the general structure of the fermionic Gaussian unitaries that guarantee a competitive scaling of the variance of the estimators for Majorana pairs and quadruples. In Sec. \ref{sec:physical_hamiltonians} we specialize our scheme to quantum chemistry Hamiltonians and present its optimized implementation for a 2D qubit layout. In Sec. \ref{sec:numerical_benchmarks} we present the results of numerical benchmarks of our scheme. Finally, in Sec. \ref{sec:discussion} we conclude the paper with a short discussion and list of open problems. In Sec. \ref{sec:materials_methods} we summarize the methods used to obtain our results. In particular, in Sec. \ref{sec:proof_joint_measurement} we provide a proof justifying our proposed general protocol. In Sec. \ref{sec:realization} we discuss the gate depth and gate count required to implement our scheme on a quantum computer under different fermion-to-qubit mappings and in different qubit layouts.

\section{results}
\subsection{Setting and notation}\label{sec:Setting}

We will consider $N$ mode fermionic systems described on a Fock space $\H_N=\mathrm{Fock}_f(\C^N)$ which is a $2^N$ dimensional space spanned by Fock states $\ket{\bm{n}}=\ket{n_1,\ldots,n_N}$, where $n_i\in\lbrace{0,1\rbrace}$ are occupation numbers of fermionic particles in $N$ fermionic modes (the modes are chosen by the choice of basis $\lbrace{\ket{i}\rbrace}_{i=1}^N$ for the single-particle Hilbert space $\C^N$). In $\H_N$ we have the natural action of the fermionic creation and annihilation operators $a^\dag_i,a_i$, $i=1,\ldots,N$, satisfying the canonical anticommutation relations: $\{a_i,a_j\}=\{a_i^\dagger,a_j^\dagger\}=0$ and $\{a_i,a_j^\dagger\}=\delta_{ij}\I$, with $\I$ the identity operator on $\H_N$. 

It is convenient to introduce Majorana fermion operators $\gamma_1,\gamma_2,\ldots,\gamma_{2N}$, which are fermionic analogues of quadratures known from quantum optics and defined as $\gamma_{2i-1} =a_i+a^\dagger_i$ and $\gamma_{2i}=\mathrm{i}(a^\dagger_i -a_i)$.
Majorana operators are Hermitian and satisfy the anticommutation relations $\lbrace{\gamma_i,\gamma_j\rbrace}=2\delta_{ij}\I$ . In this work we are interested in estimating expectation values of (Hermitian) products of Majorana operators, with particular emphasis of monomials of degree two and four (pairs and quadruples). For even-sized subset $A\subset [2N]$ (using the convention $[k]=\lbrace{1,\ldots,k\rbrace}$) we define: $\gamma_A \coloneqq \mathrm{i}^{|A|/2} \prod_{i\in A} \gamma_i$, where $|A|$ denotes the size of subset $A$. We assume the convention that terms in the product are arranged in a non-decreasing manner, for example $\gamma_{1,3}=i \gamma_1 \gamma_3$ and  $\gamma_{1,3,6,11}=- \gamma_1 \gamma_3 \gamma_6 \gamma_{11}$. For any $A,B \subset [2N]$, Majorana monomials satisfy the commutation relation $\gamma_A \gamma_B= (-1)^{|A|\cdot|B|+|A\cap B|} \gamma_B \gamma_A$, which makes it impossible to measure all observables simultaneously \cite{BravyiTerhav2010}. 

 In quantum theory, a general measurement is described by a postitive operator-valued measure (POVM) i.e. a collection $\M=\lbrace{M_a\rbrace}$ of operators (called effects) satisfying $M_a\geq 0$ and $\sum_a M_a = \I$. In a given experimental run, the probability of obtaining an outcome $a$ from a measurement $\M$ on the state $\rho$ is given by the Born rule, i.e. $p(a|\M,\rho)=\tr(M_a \rho)$. An essential component of our measurement strategy is the POVM which describes the joint measurement of $N$ disjoint (and commuting) Majorana pair operators, namely the POVM with effects:
\begin{equation}\label{eq:joint_projective_measurement}
    M_{q_1,q_2,\ldots,q_N} =\prod_{i=1}^N \frac{1}{2}(\I +q_i\cdot \gamma_{2i-1,2i})\ . 
\end{equation}
In the above expression, the values $q_i\in\lbrace{\pm1 \rbrace}$, $i=1,\ldots,N$, are the outcomes of the jointly measured (commuting) quadratic Majorana observables $\gamma_{2i-1,2i}$. Alternatively we can view $\{M_{q_1,\ldots,q_N}\}$, up to a simple transformation of outcomes ($q_i \mapsto n_i=(1+q_i)/2$), as a measurement of the occupation numbers $n_i$ in $N$ fermionic modes. Another important element of our strategy  will be the class of so-called fermionic Gaussian unitaries, also known under the name of fermionic linear optics (FLO) \cite{terhal02,bravyi02,jozsa08}. Unitaries $U$ that belong to this class, which we will denote by $\mathrm{FLO}(N)$, satisfy:
\begin{equation}\label{eq:FLOrotation}
    U \gamma_i  U^\dagger = \tilde{\gamma}_i= \sum_{j=1}^{2N} R_{ji} \gamma_j\ , 
\end{equation}
where $i=1,\ldots, 2N$ and  $R_{ji}$ are the matrix elements of $R\in O(2N)$ (with $O(2N)$ denoting the group of $2N\times 2N$ orthogonal matrices). Conversely, for every $R\in O(2N)$ it is possible to find $U_R\in \mathrm{FLO}(N)$ for which Eq. \eqref{eq:FLOrotation} holds, defined up to a physically irrelevant global phase. From Eq. \eqref{eq:FLOrotation} it follows that $U_{R_1}U_{R_2}=U_{R_1 R_2}$  (up to a global phase). The $\mathrm{FLO}(N)$ transformations modify higher degree Majorana monomials in the following way:
\begin{equation}\label{eq:monomialROT}
     U \gamma_A  U^\dagger = \sum_{B\subset [2N] ,\  |B|=|A|} \det(R_{B,A})\  \gamma_B\ , 
\end{equation}
where $R_{B,A}$ is an $|A|\times |A|$ submatrix  of $R$ obtained from the subsets of rows $B$ and columns $A$ (arranged in increasing fashion). 

It is often convenient to represent a fermionic system as a system of qubits via a fermion-to-qubit mapping. A transformation of this type maps the $2N$ Majorana operators to multi-qubit Pauli operators while preserving their anticommutation relations. One such mapping is the Jordan-Wigner (JW) transformation \cite{jordan93}, in which the Majorana operators of mode $j\in [N]$ are represented as the $N$-qubit Pauli operators, $\gamma_{2j-1}=Z^{\otimes j-1}\otimes X\otimes\I_2^{\otimes N-j}$ and $\gamma_{2j}=Z^{\otimes j-1}\otimes Y\otimes\I_2^{\otimes N-j}$, where $X,Y$ and $Z$ are the qubit Pauli operators, and $\I_2$ the $2\times 2$ identity. While the Jordan-Wigner transformation can be defined irrespective of the physical arrangement of qubits, different layouts (and labellings of qubits) can have various implementation benefits (see Sec. \ref{sec:realization} for further discussions and exemplary realizations of the JW transformation in 1D and 2D layouts).

Utilizing these ingredients we will present a measurement scheme described by a single (parent) POVM that jointly measures noisy variants of all non-commuting Majorana observables.  In particular, we describe a parent POVM together with classical post-processing such that, for any even $A\subset [2N]$, we can \emph{simultaneously} reproduce the outcome statistics of $\M^{A,\eta_A}$, where
\begin{equation}\label{eq:noisymeasurement}
M_{e_A}^{A,\eta_A}=\frac{1}{2}(\I+e_A\cdot \eta_A \gamma_A)\,,
\end{equation}
with $\eta_A\in (0,1]$ the visibility and $e_A\in\{\pm 1\}$ the measurement outcome (see Fig. \ref{fig:general_idea} for an illustration of the scheme). \rev{In our work, unless otherwise stated, the term \emph{noisy} refers to unsharp (non-projective) measurements. The ``noise'' arises from a classical post-processing of the joint measurement rather than from an imperfect measurement device.}

Following the methodology presented in \cite{mcnulty22} we can construct, from the outcomes of the joint measurement, unbiased estimators,
\begin{equation}\label{eq:ESTIMATORmajorana}
    \hat{\gamma}_A\coloneq \frac{e_A}{\eta_A}\ ,
\end{equation}
of the expectation values $\tr (\gamma_A \rho)$ for an (unknown) state $\rho$. We can therefore estimate many non-commuting operators simultaneously, just as in the case of classical shadows \cite{Zhao21,Wan22,huang20}. The efficiency of the strategy for a set of observables $\mathcal{S}_k=\{\gamma_A:|A|=k\}$ is quantified by the sample complexity, i.e. the number of copies of the state $\rho$ needed to ensure, with probability at least $1-\delta$, that $|\tr(\gamma_A\rho)-\hat\gamma_{A,S}|<\epsilon$ for all $\gamma_A\in\mathcal{S}_k$, where $\hat\gamma_{A,S} = (1/S)\sum_{i=1}^S \hat\gamma^{(i)}_{A}$ is the empirical mean of the estimator $\hat\gamma_{A}$ from $S$ independent rounds. Since the estimator $\hat\gamma_A$ is binary and takes the values $\{\pm \eta_A^{-1}\}$, we can apply Hoeffding's inequality to bound the probability for each observable \cite{Hoeffding63}. Following simple reasoning (see \cite{MCO2024}) based on the union bound (over different observables $\gamma_A$), we get 
\begin{equation}\label{eq:sampleCOMPLEXITY}
S=\frac{2}{\eta_k^2 \epsilon^2}\log(2|\mathcal{S}_k|/\delta) \ ,
\end{equation}
where $|\mathcal{S}_k|=\binom{2n}{k}$ denotes the number of estimated observables and $\eta_k$ the maximal visibility for which all observables in $\mathcal{S}_k$ can be jointly measured.

For a random variable $y$ we denote by $\mathrm{Pr}_y (\mathcal{Y})$ the probability of event $\mathcal{Y}$ occurring. For a real-valued function $f$ of a random variable $y$, we will denote by $\mathbb{E}_y[f]$ the expectation value of the random variable $f(y)$. Finally, for two positive-valued functions $f(x), g(x)$ we will write
$f = \mathcal{O}(g)$ if there exists a positive constant $C$ such that $f(x)\leq C\cdot g(x)$, for sufficiently large $x$. We will write $f=\Omega(g)$ if $g=\mathcal{O}(f)$ and finally $f=\Theta(g)$ if $f=\mathcal{O}(g)$ and  $g=\mathcal{O}(f)$. 

\begin{figure}
    \centering		\includegraphics[width=8cm]{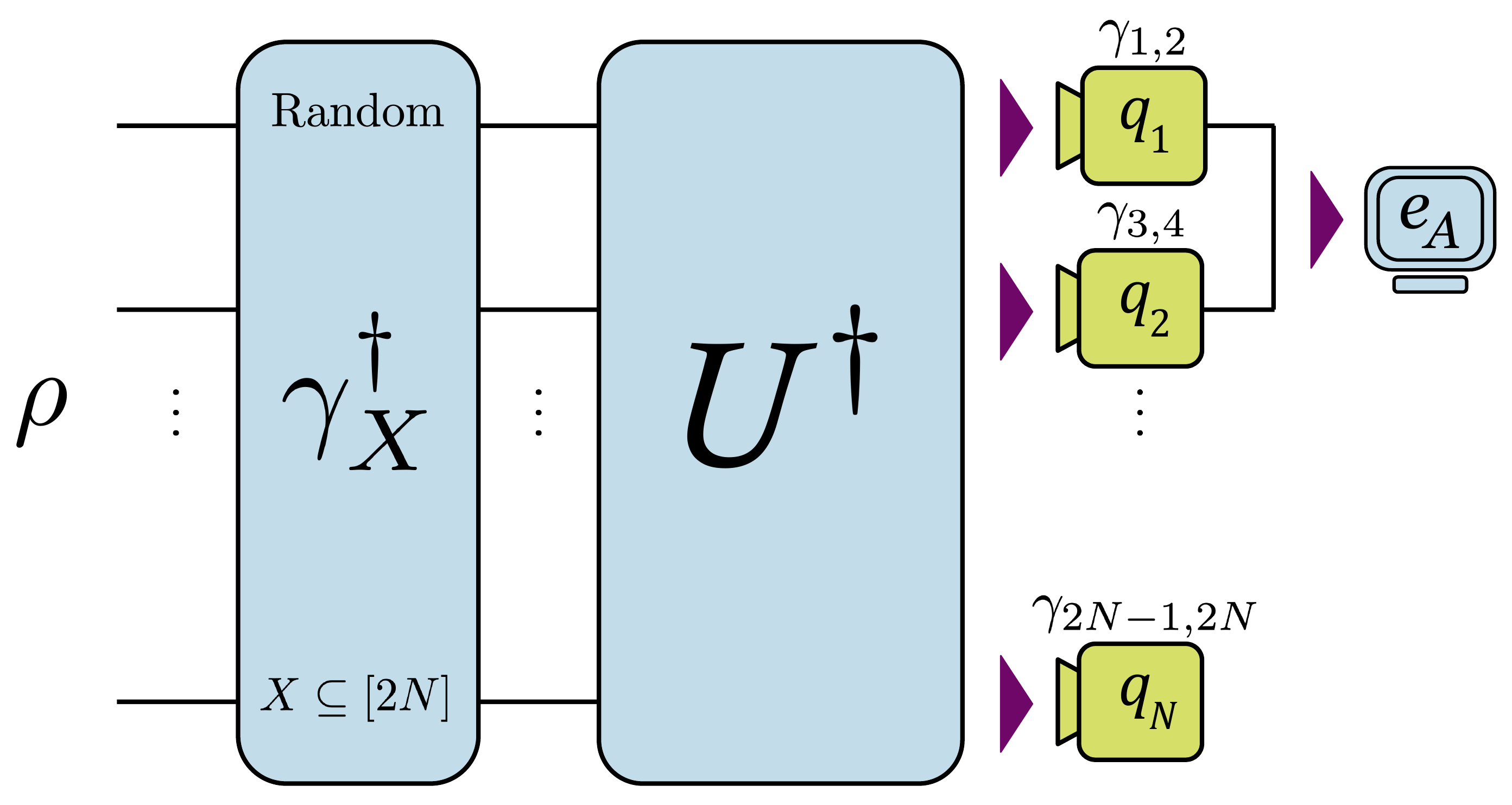}
\caption{\label{fig:general_idea}  \textbf{Circuit implementation of the joint measurement scheme for an $N$ mode fermionic state.}\\ The scheme is described in (i)--(iv) in Sec. \ref{sec:Genprotocol}. In step (i) the state $\rho$ evolves under the action of a Majorana monomial $\gamma_X^{\dagger}$, sampled uniformly at random from the set of $2^{2N}$ Majorana monomials. In step (ii) a fermionic Gaussian unitary $U^{\dagger}$ is applied. In step (iii) the $N$ projective measurements of the disjoint Majorana pairs are performed, with outcomes $q_1,\ldots,q_N$. Finally, for a given $A\subset [2N]$, one outputs $e_A$ according to step (iv).
	}
\end{figure} 

\subsection{A general joint measurement strategy} \label{sec:Genprotocol}

We now present a general strategy to simultaneously measure noisy versions of all Majorana fermion observables $\gamma_A$, with $|A|$ even, on an unknown fermionic state $\rho$. First, fix a suitably chosen $U\in\mathrm{FLO}(N)$ corresponding to $R\in O(2N)$. Let $B'\subset [N]$ label a suitably chosen subset of the $N$ commuting Majorana pairs which form the POVM $\M$ in Eq. (\ref{eq:joint_projective_measurement}), with cardinality $|B'|=|A|/2$.  Set $B=\cup_{i\in B'} \lbrace{2i-1,2i \rbrace}$ and $s_{AB}=\mathrm{sgn}(\det(R_{A,B}))$.

A single round of the measurement scheme is described as follows:

\vbox{
\begin{itemize}
    \item[(i)]  Draw $X\subseteq[2N]$ according to a uniform distribution on all $2^{2N}$ subsets of $[2N]$. Apply $\gamma^\dagger_X$ to $\rho$, obtaining $\rho'=\gamma^\dagger_X \rho \gamma_X$.
    \item[(ii)] Apply $U^\dagger$ to $\rho'$, obtaining $\rho''=U^\dagger\gamma^\dagger_X \rho \gamma_XU$.
    \item[(iii)] Simultaneously measure, on the state $\rho''$, the commuting Majorana pair operators $\gamma_{1,2},\gamma_{3,4},\ldots, \gamma_{2N-1,2N}$, obtaining outcomes $\bm{q}=(q_1,q_2,\ldots, q_N)$ with $q_i=\pm1$.
    \item[(iv)] Output $e_A=s_{AB}(-1)^{|A\cap X|} \prod_{i\in B'} q_i$.
\end{itemize}
}
\begin{prop} \label{prop:JMresult} For all $A\subset[2N]$, $|A|$--even, the above procedure generates samples $e_A$ from the POVM $\M^{A,\eta_A} $ defined in Eq. \eqref{eq:noisymeasurement}, i.e. a noisy variant of the projective measurement of $\gamma_A$, on a state $\rho$, with visibility $\eta_A=|\det(R_{A,B})|$.
\end{prop}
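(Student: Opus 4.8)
The plan is to compute the statistics of the output bit $e_A$ directly, exploiting that $e_A\in\{\pm1\}$ so that its distribution is completely fixed by its mean $\mathbb{E}[e_A]$ over the randomness in both the sampled subset $X$ and the measurement outcomes $\bm q$. It suffices to show $\mathbb{E}[e_A]=\eta_A\,\tr(\gamma_A\rho)$ with $\eta_A=|\det(R_{A,B})|$, since then $\Pr(e_A=e)=\tfrac12\big(1+e\,\eta_A\tr(\gamma_A\rho)\big)=\tr(M^{A,\eta_A}_e\rho)$, which is precisely the noisy POVM of Eq.~\eqref{eq:noisymeasurement}.

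First I would condition on a fixed $X$ and collapse the $\bm q$-dependence. Since $e_A$ depends on $\bm q$ only through $\prod_{i\in B'}q_i$, the relevant operator is $\sum_{\bm q}\big(\prod_{i\in B'}q_i\big)M_{\bm q}$. Using the product form of the effects in Eq.~\eqref{eq:joint_projective_measurement}, summing over each $q_i$ with $i\notin B'$ returns $\I$, while each $q_i$ with $i\in B'$ returns $\gamma_{2i-1,2i}$; because $B$ is a union of consecutive index pairs, the product of these commuting quadratic monomials equals $\gamma_B$. Hence the $\bm q$-averaged contribution for fixed $X$ is $\tr\big(\gamma_B\,U^\dagger\gamma_X^\dagger\rho\,\gamma_X U\big)=\tr\big(U\gamma_B U^\dagger\,\gamma_X^\dagger\rho\,\gamma_X\big)$.

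Next I would push the two group actions through. The FLO monomial rule Eq.~\eqref{eq:monomialROT} expands $U\gamma_B U^\dagger=\sum_{|C|=|B|}\det(R_{C,B})\gamma_C$, and cyclicity of the trace together with the Majorana commutation relation turns the conjugation by $\gamma_X$ into a sign, $\gamma_X\gamma_C\gamma_X^\dagger=(-1)^{|X||C|+|X\cap C|}\gamma_C$; here the global phase of $\gamma_X$ cancels, so odd values of $|X|$ cause no difficulty. Inserting the prefactor $s_{AB}(-1)^{|A\cap X|}$ from step (iv) and averaging uniformly over $X$ then attaches to each $\gamma_C$ term the character sum $2^{-2N}\sum_{X\subseteq[2N]}(-1)^{|A\cap X|+|X||C|+|X\cap C|}$.

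The crux of the argument, and the step I expect to be most delicate, is evaluating this $X$-average, which acts as a dephasing twirl that must annihilate every off-diagonal term $C\neq A$. Here I would use that $|C|=|B|=|A|$ is even, so $(-1)^{|X||C|}\equiv 1$, reducing the exponent modulo $2$ to $|X\cap(A\triangle C)|$; the sum then factorizes over the $2N$ coordinates and vanishes unless $A\triangle C=\emptyset$, i.e.\ $C=A$, in which case it equals $1$. Only the $C=A$ term survives, leaving $\mathbb{E}[e_A]=s_{AB}\det(R_{A,B})\tr(\gamma_A\rho)$, and since $s_{AB}=\mathrm{sgn}(\det(R_{A,B}))$ this equals $|\det(R_{A,B})|\,\tr(\gamma_A\rho)=\eta_A\tr(\gamma_A\rho)$, as required. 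The careful bookkeeping of the commutation signs and the parity argument forcing $C=A$ are where the real work lies; the remaining steps are linearity and the Born rule.
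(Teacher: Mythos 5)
Your proposal is correct and follows essentially the same route as the paper's proof: collapsing the $\bm q$-average to the operator $\gamma_B$, expanding $U\gamma_B U^\dagger$ via Eq.~\eqref{eq:monomialROT}, converting conjugation by $\gamma_X$ into the sign $(-1)^{|X\cap C|}$, and using the character sum over $X$ to kill all terms with $C\neq A$. Working with $\mathbb{E}[e_A]$ rather than $\Pr(e_A=x)$ is an equivalent reformulation for a $\pm1$-valued outcome, and your explicit handling of the $(-1)^{|X||C|}$ factor and of odd $|X|$ only makes the bookkeeping slightly more careful than the paper's.
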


For the proof see Sec. \ref{sec:proof_joint_measurement}. Note that steps (i)--(iii) above are independent of the observable $\gamma_A$ whose noisy variant we aim to measure. In step (iv) we conduct an efficient post-processing that depends only on $A$ and $B'$ (which can depend on $A$) and a randomly chosen subset $X$ of $[2N]$. For this reason the POVM realized in steps (i)--(iii) is a parent POVM for noisy versions of the observables $\gamma_A$ with visibility parameter $\eta_{A}$, for all subsets $A\subset[2N]$ of even size. The outcomes can therefore be used to estimate, simultaneously, all non-commuting observables via the estimators $\hat\gamma_A$ from Eq. (\ref{eq:ESTIMATORmajorana}), since $\eta_A^{-1}\mathbb{E}_{X,\bm{q}}[e_A] = \tr(\rho \gamma_A)$. Next, to minimize the sample complexity we construct unitaries which aim to maximize the visibility of the measurements.

\subsection{Gaussian unitaries for measurements of Majorana pairs and quadruples}\label{sec:constructionOFunitaries}

\begin{figure}[h]
\centering
\includegraphics[width=8.5cm]{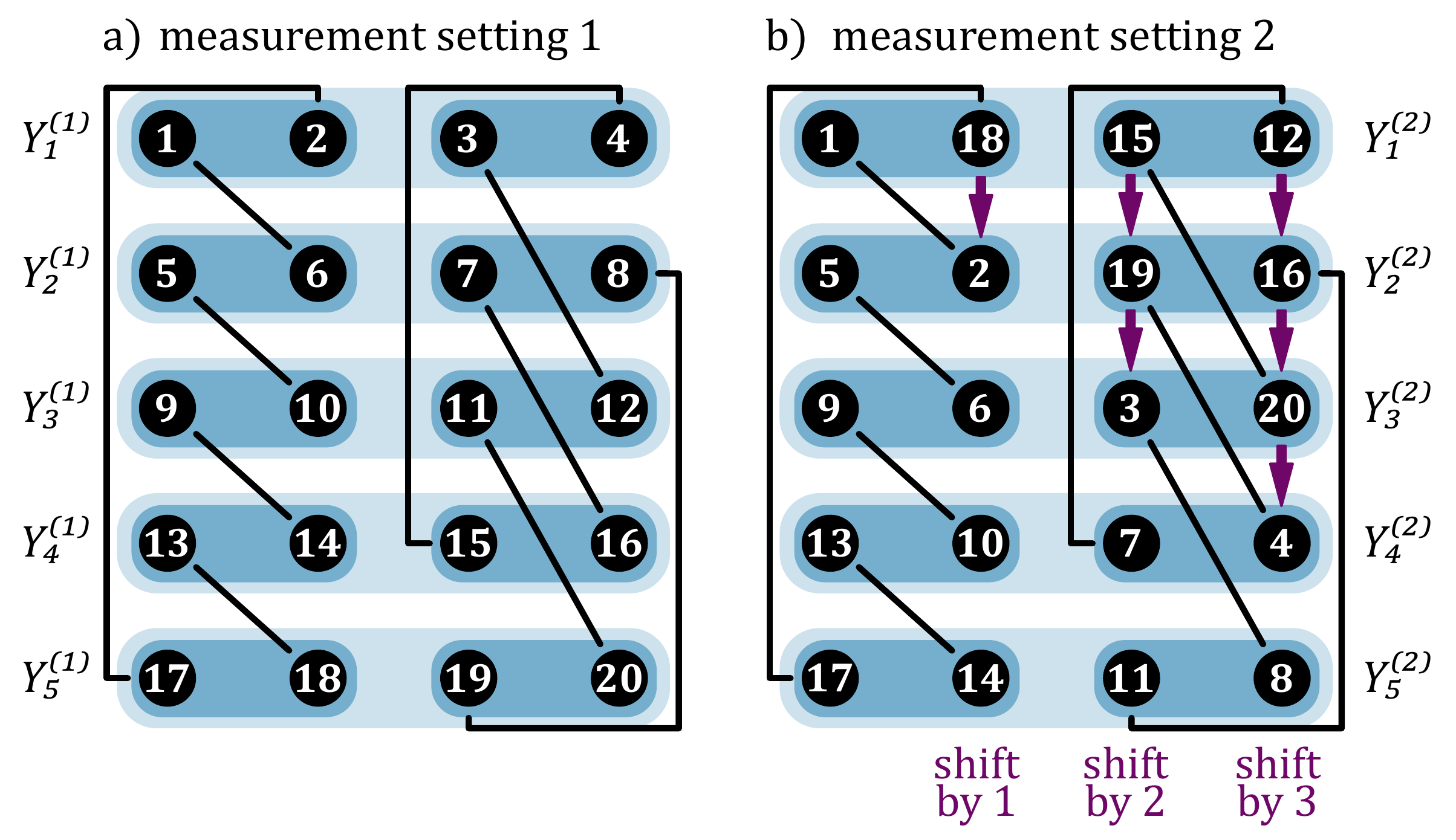}
	\caption{\label{fig:LL+1}
\textbf{Example of two measurement settings for measuring all Majorana pairs for $N=10$.} \\ The $2N=20$ Majorana operators are partitioned into 5 subsets (rows) $Y^{(r)}_{\alpha}$ (represented by light blue rounded rectangles) of cardinality 4, for each measurement setting $r=1,2$. In the first measurement setting (a), the unitary $U_{\text{sup}}$ acts via a lower flat orthogonal matrix on each row $Y_{\alpha}^{(1)}$, transforming every $\gamma_k$, $k\in Y_\alpha^{(1)}$, into a balanced superposition $\tilde{\gamma}_k=\sum_{l\in Y_\alpha^{(1)}} b^{(\alpha)}_{lk} \gamma_l$, with $b^{(\alpha)}_{lk}=\Omega(N^{-1/4})$. The pairings $(\pi(2j-1),\pi(2j))$ (marked by solid black lines) are constructed by permuting the standard pairings $\{(2j-1,2j)\}_{j=1}^N$ (represented by dark blue rounded rectangles) with a permutation $\pi$ that vertically shifts (cyclically) the second and fourth columns by 1 and 2 respectively. These pairings ensure the sample from a noisy variant of $\gamma_{ij}$ can be obtained if $i$ and $j$ are in different rows. In the second setting (b), the scheme is modified by reshuffling the operators by a permutation $\sigma^{(2)}$, that shifts columns 2,3,4 (of setting 1) vertically downwards by 1,2,3 rows, respectively. All operators in the same row of setting 1 now appear in different rows of setting 2, ensuring joint measurability of all pairs. The standard pairings represented by dark blue rectangles are consistent with the JW encoding in a 2D grid (see Fig. \ref{fig:physLayouts}), where each such rectangle corresponds to one qubit.}
\end{figure} 

To target Majorana pairs and quadruples, we propose a randomized version  of the original scheme by introducing a preliminary step (o)  in which every experimental run is preceded by sampling uniformly at random from a set of $K$ unitaries $U^{(r)}\in\mathrm{FLO}(N)$, $r=1,\ldots,K$. Then, we proceed with the usual steps (i)--(iii), as described in Sec. \ref{sec:Genprotocol}, with $U^{(r)}$ implemented in (iii). Here we sketch the construction of the $K$ unitaries that ensures that such a modification of the joint measurement from Prop. \ref{prop:JMresult} gives visibilities $\eta_2=\Theta(1/\sqrt{N})$ and $\eta_4=\Theta(1/N)$, where $\eta_k$ denotes the maximum visibility for which \emph{all} $k$--th degree Majorana monomials can be jointly measured by the parent. The details of the construction together with an extension to arbitrary constant-sized Majorana monomials are presented in \cite{MCO2024}. 

The unitaries we propose consist of the product of three unitaries:
\begin{equation}
\label{eq:U_sup_pair_decomp}
  U^{(r)} =U^{(r)}_{\mathrm{resh}}   U_{\mathrm{sup}} U_{\mathrm{pair}} \ ,
\end{equation}
where $U_{\mathrm{pair}}$ and $U^{(r)}_{\mathrm{resh}}$ correspond to suitably chosen permutations $R_\pi,R_{\sigma^{(r)}}\in O(2N)$, respectively, and $U_{\mathrm{sup}}$ realizes a \emph{balanced superposition} (see Fig. \ref{fig:LL+1}) of Majorana modes belonging to disjoint subsets $Y_\alpha$ which form a partition of the set of all modes such that $\cup_{\alpha=1}^L Y_\alpha =[2N]$. 
In the rest of this section we focus on a specific case in which $2N=L(L+1)$ with the Majorana modes divided into $L+1$ subsets $\{Y_\alpha\}_{\alpha=1}^{L+1}$, each with cardinality $|Y_\alpha|=L$. If this relation is not satisfied, it is possible to either add $2N_{aux}\leq 2L$ auxiliary Majorana modes to embed a smaller fermionic system into a larger one, or to drop the restriction that every $Y_\alpha$ has the same size (see \cite{MCO2024} for details).

The role of $U_{\mathrm{pair}}$ is to effectively realize a different pairing of the Majorana modes in measurement \eqref{eq:joint_projective_measurement} such that

\begin{equation}
    \tilde{M}_{q_1\ldots q_N} = U_{\mathrm{pair}} M_{q_1\ldots q_N} U_{\mathrm{pair}}^\dagger    =\prod_{i=1}^N \frac{1}{2}(\I +q_i \gamma_{\pi(2i-1),\pi(2i)})\ .
\end{equation}
An essential part of the construction is to ensure each pairing $(\pi(2i-1),\pi(2i))$ connects two distinct subsets of Majorana modes, i.e. $\pi(2i-1)\in Y_{\alpha}$ and $\pi(2i)\in Y_{\beta}$, such that every pair of subsets is connected once. See Fig. \ref{fig:LL+1} for an example of a pairing and divisions into disjoint subsets for $2N=20$.

In order to describe the action of $U_{\mathrm{sup}}$ we will abuse notation and denote by $Y_\alpha$ subspaces spanned by vectors $\{\ket{i}\}_{i\in Y_\alpha}$.  The orthogonal transformation behind $U_{\mathrm{sup}}$ is of the form
\begin{equation}\label{eq:SuperposingOrthogonal}
R_{\mathrm{sup}}=\oplus_{\alpha=1}^{L+1} R^{(\alpha)} \ ,
\end{equation}
where  $R^{(\alpha)}$ are operators acting on spaces $Y_\alpha$ constructed from $(L$$)\times (L)$ \emph{lower-flat} orthogonal matrices  defined in \cite{MCO2024}, i.e., matrices with entries satisfying $R^{(\alpha)}_{ij}=\Omega(1/\sqrt{L})=\Omega(1/N^{1/4})$ for $i,j\in Y_\alpha$. Note that orthogonal matrices with this property exist in every dimension \cite{Jaming15}.

Finally, for $r>1$, the permutation $\sigma^{(r)}$ (which is not unique) corresponding to $U^{(r)}_{\mathrm{resh}}$ acts by reshuffling the modes into different partitions $\sigma^{(r)}(Y_\alpha):=Y_\alpha^{(r)}$ of $[2N]$, for every $\alpha$. For $r=1$ we set $U^{(r)}_{\mathrm{resh}}=\I$ and label $Y_\alpha^{(1)}:=Y_\alpha$.

Performing the measurement with $U^{(r)}$ defined above, we can verify (by directly repeating steps in the proof of Prop. \ref{prop:JMresult}) that every Majorana pair $\gamma_{ij}$, with indices $i,j$ belonging to different subsets, $i\in Y_{\alpha_i}^{(r)}, j\in Y_{\alpha_j}^{(r)}$,  can be jointly measured with visibility $\eta_{ij}=|R^{(\alpha_i)}_{ki}R^{(\alpha_j)}_{lj}|$, where $(k,l)$ is the pairing between subsets $Y_{\alpha_i}^{(r)}$ and $Y_{\alpha_j}^{(r)}$ (note that by construction $l\in Y_{\alpha_i}^{(r)}$ and $k\in Y_{\alpha_j}^{(r)}$). Using the flatness property of the rotations $R^{(\alpha)}$, we get $\eta_{ij}=\Theta(1/\sqrt{N})$. By analogous reasoning, a four-element subset $A=\{ijkl\}$ together with the same unitary and pairings described above, results in $\eta_A=\Theta(1/N)$, as long as the individual elements $i,j,k,l$ belong to different subsets $Y_\alpha^{(r)}$. 

In general not all pairs and quadruples will consist of elements belonging to different subsets $Y_\alpha^{(r)}$. We denote by $\mathcal{M}^{(r)}$ the set of Majorana pairs and quadruples that are \emph{consistent} with the partition structure inherent to the unitary $U^{(r)}$, i.e. $A\in\mathcal{M}^{(r)}$ if $A$ contains elements only from distinct partition subsets. After implementing steps (o)--(iii) of the measurement, if $A\in\mathcal{M}^{(r)}$ we continue to step (iv), otherwise we perform (iv') by sampling the outcome $e_A\in\{\pm 1\}$ at random with equal probability.

This modification realizes a joint measurement of all pairs and quadruples $A$ with visibilities $\eta_{ij} \geq (1/K_{2}) \tilde{\eta}_{ij}$ and $\eta_{ijkl}\geq (1/K_4) \tilde{\eta}_{ijkl}$, where $\tilde{\eta}_A$ is the visibility attainable from the unitary whose partition is consistent with $A$, and $K_2,K_4$ are the minimal numbers of measurement settings that ensure all pairs and quadruples are compatible with at least one measurement round, respectively.  In Fig. \ref{fig:LL+1} we give an example of two measurement settings ($K_2=2$) that ensure all  pairs are measured for a system of $2N=L(L+1)$ Majorana modes arranged in a rectangular layout (with $L$--even). It can also be shown that when $L+1$ is a prime number, all Majorana quadruples can be measured using $K^{\text{prime}}_4=7$ settings via a method that is compatible with the original layout and can be obtained by permuting Majorana modes in columns of the Majorana fermion arrangement from Fig. \ref{fig:LL+1} (part a). See Supplementary Information \ref{app:quadropNumbr} for the proof. For general $L$, a probabilistic reasoning given in \cite{MCO2024} ensures that $K_4=9$ random partitions of the $2N=L(L+1)$ modes into subsets of size $L$ (not necessarily compatible with the 2D layout) suffice to measure all quadruples. Furthermore, we can extend this to arbitrary $N$ mode systems by relaxing the constraint that the subsets are equally sized.

\begin{prop}\label{prop:Randomized}
     The randomized version of the measurement protocol defined in Sec. \ref{sec:Genprotocol} with at most 9 different FLO unitaries $U^{(1)},\ldots,U^{(9)}$ jointly measures the noisy variants  $\M^{A,\eta_A}$ for all Majorana pairs and quadruples with 
     
\begin{equation}\label{eq:visibility_quadratic_quart}
\eta_A=\begin{cases}
{\Theta}(N^{-1/2}) &\mbox{if} \,\,\, |A|=2 \\
{\Theta}(N^{-1}) &\mbox{if} \,\,\, |A|=4 \,.
 \end{cases}
\end{equation}
 
\end{prop}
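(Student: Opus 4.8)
The plan is to reduce the proposition to three ingredients: (a) the single-setting result of Prop.~\ref{prop:JMresult}, which gives visibility $|\det(R_{A,B})|$; (b) a convexity (mixing) argument showing that uniform randomization over the $K$ settings yields a valid parent POVM whose visibility for each $\gamma_A$ is the \emph{average} of the per-setting visibilities; and (c) a combinatorial covering argument bounding $K$. First I would fix the family $U^{(1)},\dots,U^{(K)}$ built as in Eq.~\eqref{eq:U_sup_pair_decomp}, so that the orthogonal matrix of setting $r$ is $R^{(r)}=R_{\sigma^{(r)}}R_{\mathrm{sup}}R_\pi$ with $R_{\mathrm{sup}}=\oplus_{\alpha}R^{(\alpha)}$ block-diagonal and flat. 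For an even $A$ that is \emph{consistent} with the partition of setting $r$ (all elements of $A$ in distinct subsets $Y_\alpha^{(r)}$), I would choose $B'$ so that the induced pairing connects precisely the subsets meeting $A$, and then evaluate $\tilde\eta_A^{(r)}=|\det(R^{(r)}_{A,B})|$ via Prop.~\ref{prop:JMresult}.

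The second step is the determinant factorization. Because the permutations $R_\pi$ and $R_{\sigma^{(r)}}$ merely relabel modes, $|\det(R^{(r)}_{A,B})|$ equals the magnitude of a determinant of a submatrix of $R_{\mathrm{sup}}$; block-diagonality together with the pairing (which matches each $Y_\alpha$ containing an element of $A$ to the partner subset holding its paired index) forces this submatrix to be block-antidiagonal with $1\times 1$ blocks, so the determinant factorizes into a product of $|A|$ flat entries. With $|Y_\alpha|=L=\Theta(\sqrt N)$ and each flat entry of size $\Theta(1/\sqrt L)=\Theta(N^{-1/4})$, this gives $\tilde\eta_A^{(r)}=\Theta(N^{-|A|/4})$, i.e. $\Theta(N^{-1/2})$ for pairs and $\Theta(N^{-1})$ for quadruples, exactly as claimed for the single consistent setting.

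For the randomization, I would observe that each setting realizes $\M^{A,\tilde\eta_A^{(r)}}$ when $A$ is consistent (step (iv)) and the trivial, zero-visibility POVM $\tfrac12\I$ when it is not (step (iv')). Since the noisy effect $\tfrac12(\I+e_A\,\eta\,\gamma_A)$ is affine in $\eta$, averaging uniformly over the $K$ settings produces exactly $\M^{A,\eta_A}$ with $\eta_A=\tfrac1K\sum_{r:\,A\in\mathcal{M}^{(r)}}\tilde\eta_A^{(r)}$, and the combined ``sample $r$, then run setting $r$'' measurement is a valid parent POVM, since joint measurability is preserved under convex mixtures. It then remains to invoke the covering property that every pair and every quadruple is consistent with at least one setting, so $\eta_A$ always contains a nonzero term. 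As $K\le 9$ is an absolute constant and each term is $\Theta(N^{-|A|/4})$, I obtain both the lower bound $\eta_A\ge\tfrac1K\,\tilde\eta_A^{(\mathrm{best})}=\Omega(N^{-|A|/4})$ and the upper bound $\eta_A\le\max_r\tilde\eta_A^{(r)}=O(N^{-|A|/4})$, which together give Eq.~\eqref{eq:visibility_quadratic_quart}.

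The main obstacle is the covering bound $K_4=9$ for quadruples: one must exhibit a constant-size family of partitions of the $L(L+1)$ modes into blocks of size $L$ that \emph{separates} every $4$-subset (places all four elements in distinct blocks) in at least one partition. For $L+1$ prime this can be done explicitly with $K_4^{\mathrm{prime}}=7$ partitions obtained by cyclically shifting the columns of the grid in Fig.~\ref{fig:LL+1}, reducing separation to a non-degeneracy condition modulo $L+1$ (Appendix~\ref{app:quadropNumbr}); for general $L$ the cleanest route is a probabilistic union-bound argument over random partitions, establishing that $9$ suffice. The pair covering $K_2=2$ is comparatively easy (Fig.~\ref{fig:LL+1}), and since for large $N$ every pair lies inside some separated quadruple, the $9$ quadruple settings automatically separate all pairs as well; by contrast, the mixing and determinant-factorization steps are routine.
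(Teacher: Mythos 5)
Your proposal is correct and follows essentially the same route as the paper: per-setting visibilities from Prop.~\ref{prop:JMresult} combined with the block-diagonal structure of $R_{\mathrm{sup}}$ and flatness to get $\tilde\eta_A^{(r)}=\Theta(N^{-|A|/4})$, uniform randomization over settings (with the zero-visibility fallback (iv')) giving $\eta_A\geq \tilde\eta_A/K$, and the covering arguments ($K_4^{\mathrm{prime}}=7$ from Appendix~\ref{app:quadropNumbr}, $K_4=9$ from the probabilistic argument deferred to \cite{MCO2024}). Your added observation that the quadruple-separating settings automatically separate all pairs is a clean way to justify the single bound of $9$ for both cases, which the paper leaves implicit.
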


The above result, taken together with Eq. \eqref{eq:sampleCOMPLEXITY} and the observation that we have $|\mathcal{S}_2|=\binom{2N}{2}={\Theta}(N^2)$ Majorana pairs and $|\mathcal{S}_4|=\binom{2N}{4}={\Theta}(N^4)$ Majorana quadruples, gives the claimed sample  complexities $S_2$ and $S_4$ for estimating quadratic and quartic Majorana observables respectively:
\begin{equation}\label{ep:sampleCompl}
    S_2=   \mathcal{O}(N \log(N)/\epsilon^2)\ ,\ S_4= \mathcal{O}(N^2 \log(N)/\epsilon^2) \ .
\end{equation}

It is shown in \cite{MCO2024} that the joint measurement of Prop. \ref{prop:Randomized} is asymptotically optimal. In other words, the optimal visibility for which the set of observables $\mathcal{S}_k$ can be jointly measured (via \emph{any} parent) satisfies $\eta_k=\Theta(N^{-k/4})$ for $k=2,4$. Therefore, it is not possible to construct a joint measurement which reduces the sample complexity by more than a constant factor.

   In general, in a 2D rectangular $L_x \times L_y$ layout of Majorana modes (see e.g. Fig. \ref{fig:LL+1}, also Fig. \ref{fig:rounds} in Sec. \ref{sec:physical_hamiltonians}) we can refer to transformations which act vertically or horizontally. A \emph{vertical (horizontal)} FLO transformation is one whose corresponding orthogonal matrix $O$ can be written as:
   \begin{equation}\label{eq:vertical_horizontalO}
    O= \bigoplus_{i=1}^{L_x (L_y)} O_i, 
    \end{equation}
    where $O_i$ acts on modes within the $i$-th column (row) in the 2D layout. An important property of $U^{(r)}$ from \eqref{eq:U_sup_pair_decomp} is that it decomposes into a product of a horizontal transformation $U_{\mathrm{sup}}$ and vertical transformations $U_{\mathrm{pair}}$, $U_{\mathrm{resh}}^{(r)}$.  This makes it feasible for experimental implementations in two-dimensional qubit architectures under the JW encoding in gate depth $\mathcal{O}(N^{1/2})$ and with two-qubit gate count $\mathcal{O}(N^{3/2})$, as explained in Sec. \ref{sec:realization}.

\begin{figure*}[]
\centering
\includegraphics[width=18cm]{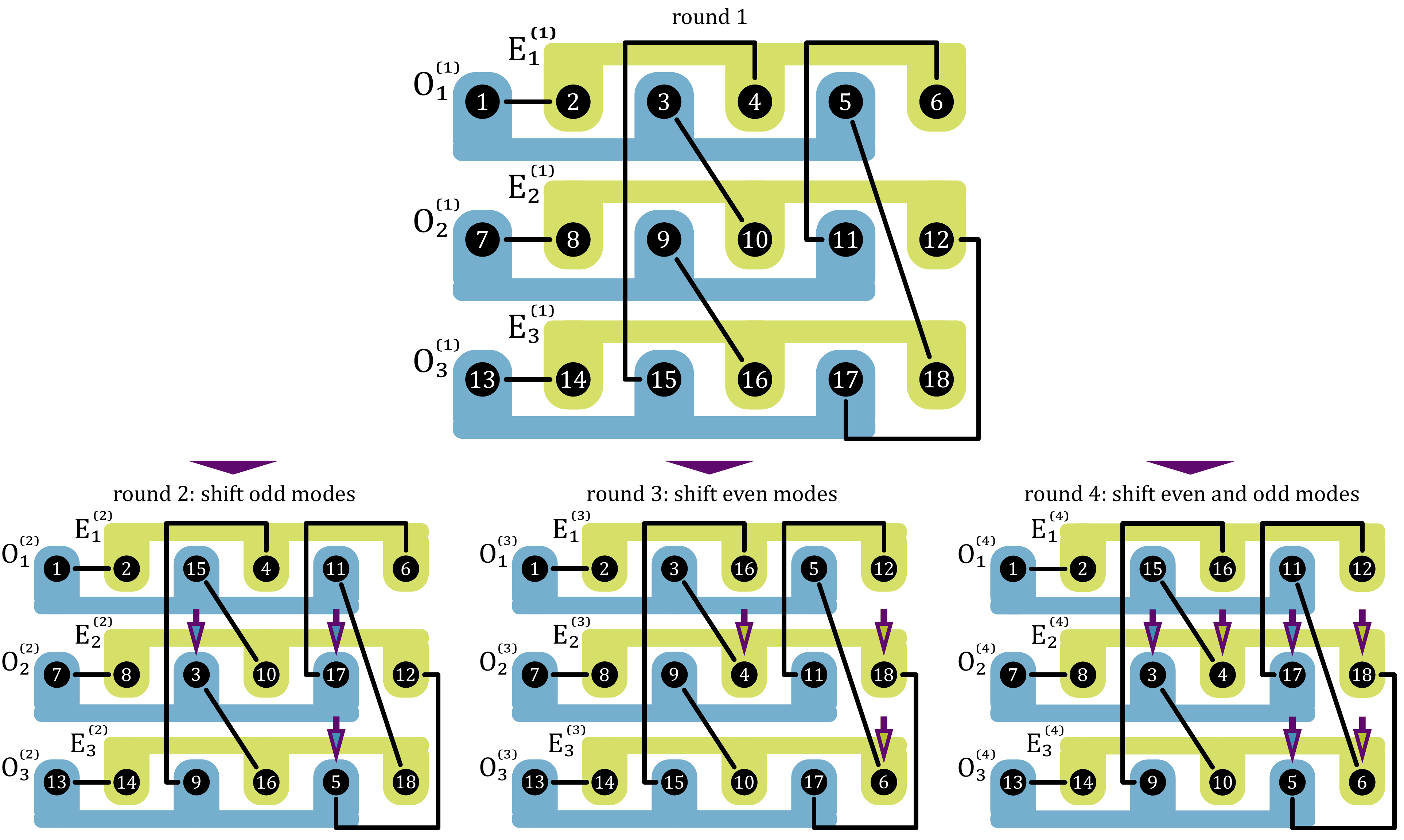}
\caption {\textbf{Schematic representation of the construction of the 4 rounds of our measurement strategy for physical fermionic Hamiltonians.}\\
The Hamiltonians are described in Sec. \ref{sec:physical_hamiltonians}. The $2N$ Majorana modes are arranged on an $2L \times L$ grid and partitioned into $L$ even ($E_{\alpha}$) and $L$ odd ($O_{\alpha}$) subsets (clusters) of cardinality $L$. The unitary $U_{\text{sup}}$ acts via a lower-flat orthogonal matrix on each cluster. In round 1, the pairings $(\pi(2j-1),\pi(2j))$ which define $U_{\text{pair}}$ (and marked by solid black lines) are constructed by a permutation $\pi$ that shifts the second and third even column cyclically by 1 and 2 respectively (in a manner analogous to Fig. \ref{fig:LL+1}). Importantly, the pairings (black lines) ensure each even cluster is coupled to each odd cluster. In rounds $r=2,3,4$, the different partitions of the modes into even and odd clusters are implemented by $U^{(r)}_{\text{resh}}$ via the permutations $\sigma^{(r)}$. The permutation $\sigma^{(2)}$ is realized by a vertical cyclical shift of the second and third odd columns of the first measurement setting by 1 and 2, respectively (bottom left). Similarly, $\sigma^{(3)}$ corresponds to a vertical cyclical shift of the second and third even columns of the first measurement (bottom middle). Finally, $\sigma^{(4)}$ is realized by performing  both the vertical shifts realizing $\sigma^{(2)}$ and $\sigma^{(3)}$  (bottom right).}
    \label{fig:rounds}
\end{figure*}

\subsection{Strategy for physical fermionic Hamiltonians}
\label{sec:physical_hamiltonians}

We now present a measurement strategy tailored to estimate expectation values of Hamiltonians describing electronic degrees of freedom in quantum chemistry systems. The Hamiltonians can be expressed as a linear combination of a subset of quadratic and quartic Majorana fermion observables \cite{Zhao20}:
\begin{equation}\label{eq:chemistryHamiltonian}
    H=\sum_{A\in \mathcal{X}_{2}\cup \mathcal{X}_{4}} h_A \gamma_A\ , 
\end{equation}
where $h_A\in\mathbb{R}$. The set $\mathcal{X}_{2}\subset \binom{[2N]}{2}$ is the set of $2$-element subsets of $[2N]$ containing an even and odd integer, and $\mathcal{X}_{4}\subset \binom{[2N]}{4}$ is the set of 4-element subsets of $[2N]$ containing 2 even and 2 odd integers. In other words, in quantum chemistry Hamiltonians, even Majorana operators couple only to odd Majorana operators whereas the only possible couplings for quadruples are those with two even and two odd operators \cite{Zhao20}. This follows from the observation that electrons have only two-body interactions (via the Coulomb potential) and there is no spin-dependant interaction or potential affecting them. It is henceforth sufficient to design a strategy which estimates noisy versions of Majorana pair and quadruple operators restricted to those in $\mathcal{X}_2$ and $\mathcal{X}_4$.

To construct an estimator for $\tr(H \rho)$ we propose four measurement \textit{rounds} (settings). In contrast to the randomized scheme in  Sec. \ref{sec:constructionOFunitaries}, the four FLO unitaries $U^{(r)} =U^{(r)}_{\mathrm{resh}}U_{\mathrm{sup}} U_{\mathrm{pair}}$ are implemented deterministically. We consider the system size $2N = 2L^2$ which allows for the arrangement of a rectangular $L\times L$ lattice of qubits encoding the $L^2$ fermionic modes (and $2L^2$ Majorana modes) via the Jordan-Wigner transformation (see Fig. \ref{fig:rounds} for an example for $L=3$). Systems which don't satisfy this can always be encoded in larger ones.

The distinct aspect of this scheme is the partitioning of the $2L \times L$ grid  representing Majorana operators into even and odd subsets (clusters) labelled $E^{(r)}_\alpha$ and $O^{(r)}_\alpha$ (and coloured green and blue in Fig. \ref{fig:rounds}), each containing $L$ Majorana modes. The FLO unitaries which implement the measurement are described as follows:

\begin{itemize}
    \item[(a)]  The permutation $\pi$ defining $U_{\text{pair}}$ translates the standard pairings to $\{(\pi(2j-1),\pi(2j))\}_{j=1}^N$ (marked by black lines in Fig. \ref{fig:rounds}) via vertical shifts of the columns of the $2L \times L$ grid. In particular, $\pi$ cyclically shifts the $\ell$--th even column vertically downwards by $\ell-1$ rows. The resulting pairs ensure all even clusters are connected to all odd clusters exactly once.
    \item[(b)] The unitary $U_{\text{sup}}$, in analogy with Eq. (\ref{eq:SuperposingOrthogonal}), acts via $R_{\text{sup}}=\oplus_{\alpha=1}^{2L}R^{(\alpha)}$, where $R^{(\alpha)}\in O(L)$ is a lower-flat orthogonal matrix acting on an even or odd cluster.
     \item[(c)] The permutation $\sigma^{(r)}$ defining $U_{\text{resh}}^{(r)}$, maps the original clusters $E^{(1)}_\alpha$ and $O_\alpha^{(1)}$ into $\sigma^{(r)}(E^{(1)}_\alpha):=E_{\alpha}^{(r)}$ and $\sigma^{(r)}(O_\alpha^{(1)}):=O_{\alpha}^{(r)}$ using vertical shifts within columns of the 2D layout. The permutation $\sigma^{(2)}$ is realized by cyclically shifting the $\ell$--th even column vertically downwards by $\ell-1$ rows, while $\sigma^{(3)}$ corresponds to the same shifts acting on the odd columns. Finally, $\sigma^{(4)}=\sigma^{(2)}\circ\sigma^{(3)}$ implements both $\sigma^{(2)}$ and $\sigma^{(3)}$ simultaneously (as illustrated by the blue and green arrows in Fig. \ref{fig:rounds}).
\end{itemize}

In Supplementary Information \ref{app:4rounds} we show that these four unitaries $U^{(r)}$ are sufficient to estimate all quadratic and quartic terms in $H$ with visibilities from Prop. \ref{prop:Randomized}. In particular, for $A\in\mathcal{X}_2\cup\mathcal{X}_4$, there exists at least one measurement round $r$ in which $A\in\mathcal{M}^{(r)}$ and therefore $\eta^{(r)}_A=\Theta(N^{-|A|/4})$. Note that because the FLO unitaries appearing in points (a)-(c) are either horizontal or vertical (see \eqref{eq:vertical_horizontalO} for the definition), they can be implemented with the same gate depth and gate count as those discussed in Sec. \ref{sec:constructionOFunitaries}.

A ``single--shot'' estimator for the Hamiltonian, constructed from the outcomes of the four measurement rounds, is defined as
\begin{equation}
\label{eq:hamiltonian_estimator}
    \hat{H} = \sum _{r=1}^4\sum_{A\in \mathcal{X}_2\cup\mathcal{X}_4}  h_A \alpha_{A}^{(r)}\hat{\gamma}_A^{(r)}\ ,
\end{equation}
where
\begin{equation}\label{eq:estimator_av}
\hat{\gamma}_A^{(r)}=\begin{cases}
e^{(r)}_{A}/\eta^{(r)}_{A} &\mbox{if} \,\,\, A \in \mathcal{M}^{(r)} \\
0 &\mbox{otherwise} \,,
 \end{cases}
\end{equation}
and $e^{(r)}_{A}\in\{\pm 1\}$ is the outcome obtained from the post-processing in step (iv) of the general protocol with unitary $U^{(r)}$ (cf. Sec. \ref{sec:Genprotocol}). We choose coefficients $\alpha^{(r)} \in \mathbb{R}$ satisfying $\sum_{r:A\in \mathcal{M}^{(r)}} \alpha_A^{(r)}=1$, which ensures $\ev{\sum_{r} \alpha_A^{(r)} \hat{\gamma}_A^{(r)}} = \mathrm{tr}(\gamma_A\rho)$ and therefore $\hat H$ is an unbiased estimator of $\mathrm{tr}(H\,\rho)$.
For details of the post-processing see Supplementary Information \ref{sec:details_strategy_physical_ham}. For details of the coefficient choice see Supplementary Information \ref{app:optimization}.
To improve the efficiency of our estimation strategy we can apply the median-of-means method (in analogy with classical shadows \cite{huang20}). This provides a simple classical post-processing in order to reduce the effect of estimation errors, and the sample complexity is bounded by a multiple of the variance $\mathrm{Var}[\hat H]$ of the estimator, which for our scheme is given explicitly in Supplementary Information \ref{app:variance}.

\subsection{Numerical benchmarks}\label{sec:numerical_benchmarks}

\begin{table}[b]
\begin{tabular}{|c|c|c|c|c|c|c|}
\hline
\begin{tabular}[c]{@{}c@{}}Molecule\\ (qubits)\end{tabular} & \begin{tabular}[c]{@{}c@{}} CS-Pauli \\ \cite{huang20} \end{tabular}& \begin{tabular}[c]{@{}c@{}}LBCS\\ \cite{hadfield20}\end{tabular}   & \begin{tabular}[c]{@{}c@{}}BRG\\ \cite{Huggins21}\end{tabular}  & \begin{tabular}[c]{@{}c@{}}CS-FGU \\ \cite{Zhao21} \end{tabular} & \begin{tabular}[c]{@{}c@{}}This work\\  \end{tabular} \\ \hline
H2 (8)                                                      & 51.4     & 17.5 & 22.6 & 69.6 & 26.6                                            \\ \hline
LiH (12)                                                    & 266      & 14.8 & 7.0    & 155      & 106                                              \\ \hline
BeH2 (14)                                                   & 1670     & 67.6 & 68.3 & 586      &  245                                              \\ \hline
H2O (14)                                                    & 2840     & 257  & 6559 & 8440     & 1492                                               \\ \hline
NH3 (16)                                                    & 14400    & 353  & 3288 & 5846     & 1061                                                 \\ \hline
\end{tabular}
\caption{\textbf{Comparison of variances \rev{of a single measurement round} for estimating energies of standard benchmark molecules.}\\ The variances are calculated for the ground states of five molecular Hamiltonians, with units Ha$^2$. The first four strategies (taken from a summary in \cite{Zhao21}) include: classical shadows with Pauli measurements (CS-Pauli \cite{huang20}); locally biased classical shadows (LBCS \cite{hadfield20}); basis-rotation grouping (BRG \cite{Huggins21}); and classical shadows with fermionic Gaussian unitaries (CS-FGU \cite{Zhao21}). Where applicable (\cite{huang20, hadfield20, Huggins21}), the Jordan-Wigner encoding was used. The last column summarizes our results, optimized over the coefficients $\alpha_A^{(r)}$. \rev{To ensure a fair comparison of the sample complexities, the variances in the last column should be multiplied by a factor of four to account for the need of four independent measurement rounds to estimate the energy.}}
\label{tab:benchmark}
\end{table}

We benchmark our strategy by calculating the variance of our estimator \rev{(using formula \eqref{eq:molecular_variance})} on the ground state of molecular Hamiltonians that are commonly used in the literature (see e.g. \cite{Zhao21, Wan22, hadfield20, Huggins21, ogorman22, mcnulty22, gresch23}) \rev{and allow for a comparison with a wide range of strategies}. Table \ref{tab:benchmark} presents a comparison between the variances resulting from our strategy and other measurement approaches, including fermionic classical shadows \cite{Zhao21}. \rev{It is standard in benchmarks \cite{hadfield20, Zhao21} to examine the variance as an indicator of the effectiveness of a strategy.}

For the calculations in Table \ref{tab:benchmark}, the coefficients $\alpha_A^{(r)}$ of the estimator \eqref{eq:hamiltonian_estimator} are chosen as optimal (i.e., minimizing the variance) for the ground state of a given Hamiltonian. \rev{The optimized coefficients are only applicable if prior information about the state and Hamiltonian is known.} However, state-independent coefficients---uniform over rounds---can also be chosen, \rev{and they perform comparatively for larger molecules}. For details of the optimization and coefficient choice, see Supplementary Information \ref{app:optimization}. Note that our estimator requires four distinct measurement rounds. \rev{Using standard concentration arguments, the sample complexity is
bounded by a multiple of the variance \cite{huang20}, by e.g. Hoeffding's inequality}. Thus a fair comparison \rev{of sample complexities against} strategies using single round estimators requires multiplying our variance by four.
For further technical details \rev{on the numerical benchmarks} see Supplementary Information \ref{app:numerics}. 

\rev{To analyze the results of Table \ref{tab:benchmark}, we first note that BRG and LBCS are strategies tailored to specific Hamiltonians. It can be expected that for more complex Hamiltonians, and larger system sizes, their advantage would diminish, due to the large number of terms in the Hamiltonian's decomposition, making the bias towards certain observables in the decomposition less beneficial. Moreover, the optimized strategies require numerical optimizations, prior knowledge of the Hamiltonian, and a classical approximation of the underlying quantum state. For these strategies, experiments are tailored towards a specific Hamiltonian, and therefore estimating other observables with the same data is not expected to perform well. This is not the case for non-optimized strategies, including our scheme. The optimization we propose is optional and performed only in post-processing, hence it can be performed independently on each new observable of interest.}

By taking advantage of the fact that FLO circuits can be efficiently classically simulated  \cite{terhal02, Bravyi_2012}, we also implemented a simulation of our strategy on Gaussian states, whose energies can be easily calculated analytically (see e.g. \cite{Melo_2013}). \rev{For details, see Supplementary Information \ref{app:gaussian_states_simulation}}. We used the package FLOYao.jl \cite{bosse_2022} to classically simulate individual runs of our protocol, verifying (up to statistical precision) that the strategy correctly estimates the above molecular Hamiltonians on random Gaussian states. The empirical variance of this estimation also confirms (within statistical precision) the accuracy of the variance formula derived in Supplementary Information \ref{app:variance}.

\section{Discussion}
\label{sec:discussion}
We have introduced a scheme to estimate non-commuting fermionic observables based on the implementation of a joint measurement of noisy versions of products of  Majorana fermion observables. We focused on products of degree two and four, motivated by Hamiltonians from quantum chemistry. For these observables and standard benchmark Hamiltonians, our scheme exhibits sample complexity guarantees comparable with fermionic classical shadows.

The three distinctive features of our strategy, which make it attractive for practical purposes, are the  low number of complicated measurement settings required, low circuit depth in certain lattices, and the dependence of the observables' estimates on single-qubit measurements on few qubits. First, in our scheme to measure all degree two and four Majorana monomials with visibility $\Theta(N^{-{1/2}})$ and $\Theta(N^{-{1}})$, respectively, we need at most  nine measurement settings. For the special case of electronic structure Hamiltonians relevant for quantum chemistry, as defined in Sec. \ref{sec:physical_hamiltonians}, we need four measurement settings to estimate the energy. In both cases an additional randomization over Majorana products is required,  however this can be implemented easily, by a single gate layer under any fermion-to-qubit mapping. Second, if the system gets encoded in a 2D rectangular lattice via the Jordan-Wigner encoding, our strategy can be implemented in gate depth $\mathcal{O}(N^{1/2})$ with $\mathcal{O}(N^{3/2})$ 2-qubit gates. This compares favorably with fermionic and matchgate classical shadows, which both require depth $\mathcal{O}(N)$ with $\mathcal{O}(N^{2})$ 2-qubit gates. Third, our strategy uses single-qubit measurement outcomes from only one or two qubits to estimate Majorana pairs or quadruples, respectively. As such it may readily be combined with randomized error mitigation techniques \cite{berg_2022}.

\rev{We expect our method to be suitable for NISQ devices. In the 2D setting, the gate scaling advantage is expected to become apparent for system sizes of the order of hundreds of qubits due to the overhead in gate depth and count incurred by implementing the vertical FLO transformations (see discussion in Section \ref{sec:realization}). Moreover, for these system sizes, ease of error mitigation for our method persists, no classical optimization is necessary, and optimized methods are expected to yield less advantage for more complex Hamiltonians. Finally, in current superconducting architectures, implementing arbitrary randomized circuits quickly (especially those required by fermionic classical shadows) is a non-trivial challenge. In contrast, randomized single-layer Paulis can be effectively implemented due to the prevalence of randomized benchmarking and mitigation methods.}

There are several potential directions in which this work can be developed further. One interesting question to consider is how the choice of unitaries in the measurement process, discussed in Sec. \ref{sec:constructionOFunitaries}, can be optimized for a particular Hamiltonian, in order to estimate the relevant products with higher visibility. The choice of unitary can significantly affect the variance, as demonstrated in Table \ref{tab:benchmark_appendix} of Supplementary Information \ref{app:optimization}. Optimization methods have already led to substantial reductions in the sample complexity of classical shadow protocols, especially in the locally biased regime (see Table \ref{tab:benchmark}). Moreover, in this work we only focus on degree two and four Majorana operators. A natural extension would be to consider joint measurement strategies for higher degree Majorana observables, as well as to investigate whether other layouts, fermion-to-qubit encodings, or more general orthogonal transformations, provide performance benefits. Furthermore, the effect of noisy physical implementations on the scheme could be studied, similarly to \cite{mcnulty22}. Finally, our strategy can be directly applied on native fermionic computing or simulation platforms, e.g. programmable neutral atom arrays \cite{gonzalez-cuadra_fermionic_2023}, without the need to map to qubits, and is naturally compatible with devices using a 2D layout. An interesting research direction would be to experimentally apply it on such devices.

\section{Materials and methods}
\label{sec:materials_methods}
\subsection{Proof for joint measurement strategy}
\label{sec:proof_joint_measurement}
Below we provide the proof of Proposition \ref{prop:JMresult}:
\begin{proof}
We first note that
\begin{equation}\label{eq:auxTRANSF}
    \gamma_X  U \gamma_{B} U^\dagger \gamma^\dagger_X =\sum_{\substack{Y\subset [2N]\\|Y|=|B|}} \det(R_{Y,B}) (-1)^{|X\cap Y|}  \gamma_Y\ , 
\end{equation}
where we have used \eqref{eq:monomialROT} and the commutation relation of the operators $\gamma_X$ and $\gamma_Y$. In order to prove the claim in the proposition it suffices to show that for $x=\pm 1$ we have $\mathrm{Pr}_{X,\bm{q}}(e_A=x)=\tr(\rho \frac{1}{2}(\I+x\cdot \eta_{A} \gamma_{A}  ))$, where $X$ and $\bm{q}$ are random variables on which $e_A$ depends. Using the definition of $e_A$ in step (iv) and the fact that the distribution of $X$ is uniform we have 
\begin{eqnarray}
  \mathrm{Pr}_{X,\bm{q}}(e_A=x)  =  \mathrm{Pr}_{X,\bm{q}}\left( \prod_{i\in B'} q_i =x s_{AB} (-1)^{|A\cap X|}\right) \\
   = \frac{1}{2^{2N}}  \sum_{X\subseteq [2N]} \tr\left( \rho'' \frac{1}{2}(\I + x s_{AB} (-1)^{|A\cap X|} \gamma_B)  \right), \label{eq:aux2}
\end{eqnarray}
where we have also used the fact that  $q_{B'}:=\prod_{i\in B'}q_i$ obtained from step (iii) realizes the projective measurement of   $\prod_{i\in B'} \gamma_{2i-1,2i}=\gamma_{B}$. By using the definition of $\rho''$  and switching to the Heisenberg picture in which the unitary $\gamma_X U$ is applied to the POVM element in Eq. \eqref{eq:aux2} we obtain that $ \mathrm{Pr}_{X,\bm{q}}(e_A=x)$ equals
   \begin{eqnarray*}
  \frac{1}{2^{2N+1}} \sum_{X\subseteq [2N]}  \tr\left( \rho (\I + x s_{AB} (-1)^{|A\cap X|} \gamma_X  U \gamma_{B} U^\dagger \ \gamma^\dagger_X  \right)\ .
\end{eqnarray*} 
Utilizing \eqref{eq:auxTRANSF} we find that, in the above expression, the part which is proportional to the variable $x$ equals
\begin{eqnarray}
    & \frac{s_{AB}}{2^{2N+1}} \sum_{\substack{X,Y\subseteq [2N]\\ |Y|=|B|}}  \tr\left(\rho  (-1)^{|A\cap X|+|X\cap Y|} \gamma_Y \det(R_{Y,B})\right) \nonumber \\
    & = \frac{1}{2}\eta_{A} \tr(\rho \gamma_A)\ , 
\end{eqnarray}
where we have used that for $|A|,|B|$--even: 
\[
\sum_{X,Y\subseteq [2N], |Y|=|B|} (-1)^{|A\cap X|+|X\cap Y|}=2^{2N}\delta_{A,Y}\  .\]
\end{proof}

\subsection{Implementation of measurement strategies}\label{sec:realization} 

\begin{figure}[h]
    \centering
\includegraphics[width=8cm]{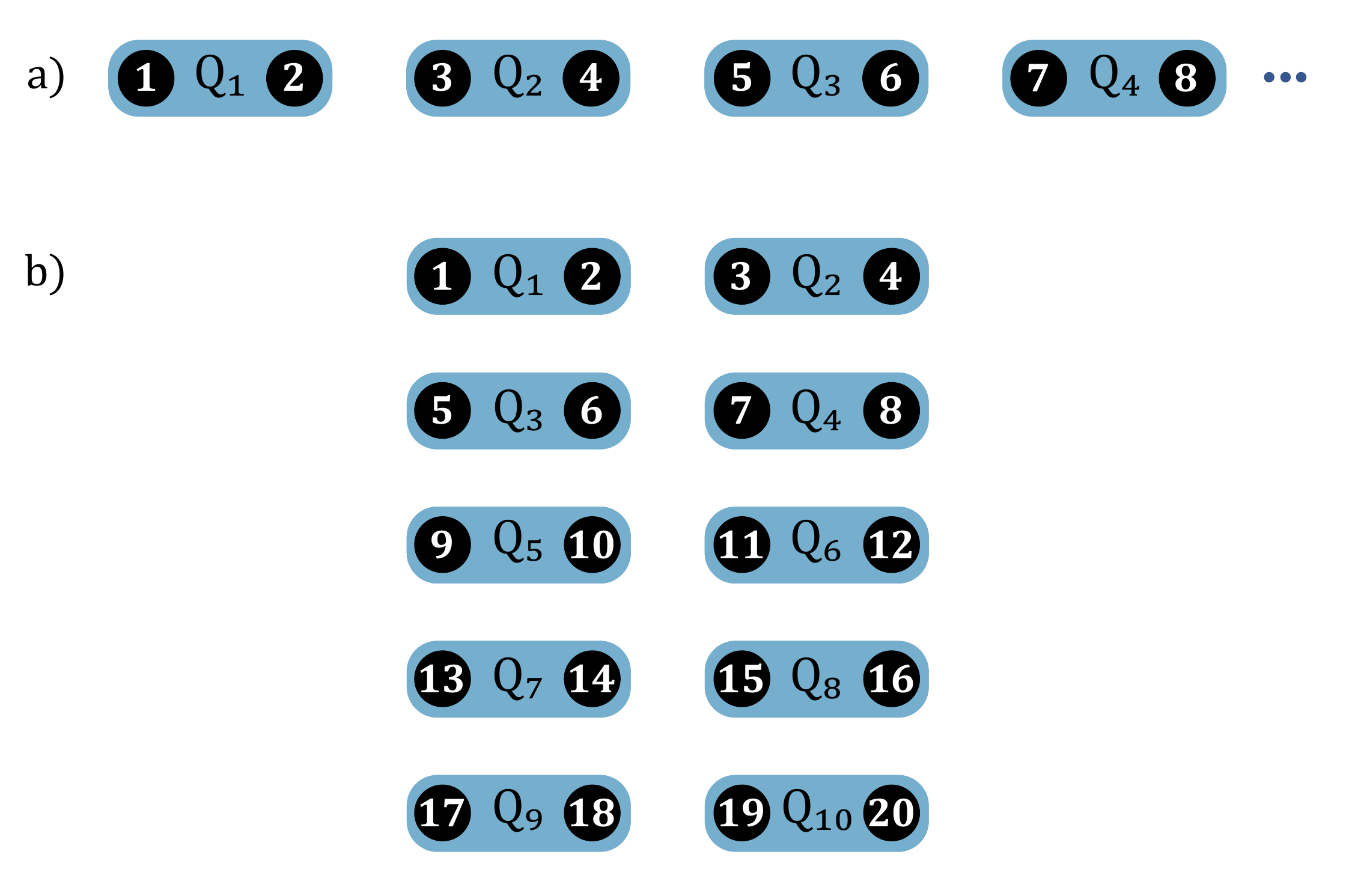}
	\caption{\label{fig:physLayouts}
\textbf{Two exemplary realizations of the Jordan-Wigner encoding in systems of qubits that exhibit different spatial locality.} \\ a) 1D array and b) $L/2 \times (L+1)$ rectangular qubit array (for $L=4$). The individual qubits are indicated by consecutive pairs within the blue rounded rectangles labelled by $Q_1,Q_2,\ldots$, each enclosing two Majorana mode operators 
 indicated by black dots.  Importantly,  locality in Majorana modes (e.g. coupling $\exp(\mathrm{i} \varphi \gamma_{i,j})$ for neighboring Majorana modes $i,j$) does not in general correspond to locality with respect to the structure of the qubit lattice.} 
\end{figure} 

 In this part we discuss the gate depths and numbers required to implement our joint measurement strategy on a quantum computer in different fermion-to-qubit mappings and qubit layouts.
 
\emph{General strategy. } In standard fermion-to-qubit mappings (like Jordan-Wigner \cite{jordan93}, Parity \cite{Seeley_2012}, Bravyi-Kitaev \cite{bravyi02}, or superfast encodings \cite{bravyi02, Setia_2019}; see \cite{Seeley_2012} for a review), unitaries from step (i) of Sec. \ref{sec:Genprotocol} can be implemented via a depth-1 circuit consisting of Pauli gates. On the other hand, the realization of any $U\in\mathrm{FLO}(N)$, including those from step (ii), can be achieved with a circuit of depth at most $d_{\mathrm{gen}}=\mathcal{O}(N^3)$ and $n_{\mathrm{gen}}=\mathcal{O}(N^4)$ two-qubit gates in any of the encodings. This follows from two observations. First, an arbitrary $U\in\mathrm{FLO}(N)$ can be realized by $U=\tilde{U} \gamma_1^{s}$,   where $\tilde{U}$ is parity preserving (i.e., it commutes with $Q:=\gamma_{[2N]}$), and $\gamma_1^{s}$ ($s=0,1$) is responsible for the (optional) change of the parity. It is known that $\tilde{U}$ can be implemented using   $\mathcal{O}(N^2)$ gates of the form $\exp(\mathrm{i}\varphi \gamma_{i,i+1})$ \cite{Boxio18,oszmaniec22}. Second, each of the unitaries $\exp(\mathrm{i} \varphi \gamma_{ij})$ can be realized in depth $\mathcal{O}(N)$, which is a consequence of the fermion-to-qubit encoding that maps $\gamma_{ij}$ to a Pauli string with support on at most $\mathcal{O}(N)$ qubits, and is thus realizable in depth $\mathcal{O}(N)$ and using $\mathcal{O}(N)$ gates (unitaries $\exp(\mathrm{i} \varphi P )$, for a Pauli string $P$ supported on $k$ qubits can be implemented via $\mathcal{O}(k)$ two-qubit gates \cite{nielsen00}).

However, when using the standard Jordan-Wigner encoding on a line (see Fig. \ref{fig:physLayouts}a) it is possible to realize any $U\in\mathrm{FLO}(N)$ using $\mathcal{O}(N)$ layers of   $\exp(\mathrm{i} \varphi \gamma_{i,i+1})$ applied in parallel and forming a brickwork pattern on a triangle layout \cite{oszmaniec22}. Furthermore, in this encoding $\exp(\mathrm{i} \varphi \gamma_{i,i+1})$ can be realized by a constant depth unitary acting on neighboring qubits. Rotations corresponding to disjoint Majorana pair operators belonging to each layer can be realized in parallel (see \cite{Boxio18} or \cite{oszmaniec22}). This yields the overall depth $d_{\mathrm{JW}}=\mathcal{O}(N)$ and the two-qubit gate-count $n_{\mathrm{JW}}=\mathcal{O}(N^2)$.

\emph{A rectangular qubit grid in the Jordan-Wigner encoding.} Due to the particular structure we impose on the unitaries in step (ii), our joint measurement scheme for Majorana pairs and quadruples can be realized with a smaller circuit depth and using fewer two-qubit gates in a 2D layout of qubits.  Consider a $L/2 \times (L+1)$ lattice of qubits realizing $L(L+1)$ Majorana modes via the Jordan-Wigner encoding (see Fig. \ref{fig:physLayouts}b). In Sec. \ref{sec:constructionOFunitaries} we described specific FLO unitaries $U_{\mathrm{pair}}, U_{\mathrm{sup}}, U^{(r)}_{\mathrm{resh}}$ that define measurement rounds of the protocol from Prop. \ref{prop:Randomized}. Importantly, for the variant of the Jordan-Wigner encoding in which the Majorana modes are labelled as in Fig.  \ref{fig:physLayouts}b, $U_{\mathrm{sup}}$, $U_{\mathrm{pair}}$ and $U^{(r)}_{\mathrm{resh}}$ can all be implemented in depth $d_{2D} = \mathcal{O}(N^{1/2})$ and with $n_{2D}= \mathcal{O}(N^{3/2})$ two-qubit gates. In what follows we sketch the reasoning behind these estimates.

First, the superposition unitary $U_{\mathrm{sup}}$ is \emph{horizontal} (see \eqref{eq:vertical_horizontalO}) in the 2D Majorana layout in Fig. \ref{fig:LL+1} (and consequently Fig. \ref{fig:physLayouts}b). Such transformations are consistent with the ordering that gives rise to the JW encoding and can therefore be implemented by a (tensor) product of $L+1$ independent FLO unitaries, each acting on qubits arranged in a given row. Therefore $U_{\mathrm{sup}}$ can be implemented in depth $\mathcal{O}(N^{1/2})$ and $\mathcal{O}(N^{3/2})$ nearest-neighbor two-qubit gates.

It is more challenging to implement the  unitaries $U_{\mathrm{pair}}$ and $U_{\mathrm{resh}}^{(r)}$. Note that, as explained in Sec. \ref{sec:constructionOFunitaries}, they are purely \textit{vertical} (see \eqref{eq:vertical_horizontalO}). That is, the permutations $\pi$ and $\sigma^{(r)}$ that correspond to $U_{\mathrm{pair}}$ and $U^{(r)}_{\mathrm{resh}}$ respectively, correspond to orthogonal transformations $R_{\pi/\sigma^{(r)}}=\oplus_{\beta=1}^{L/2} \tau^{\beta}$, where $\tau^{\beta}$ only permutes elements in the $\beta$--th column of qubits in the 2D layout under JW encoding. Importantly, in \cite{Boxio18} (see also \cite{Montanaro2020}), it was shown that such a class of FLO gates can be implemented in depth $\mathcal{O}(L)=\mathcal{O}(N^{1/2})$ and using $\mathcal{O}(L^3)=\mathcal{O}(N^{3/2})$ two-qubit gates, by utilizing one extra auxiliary qubit per row. In Supplementary Information \ref{app:decomp} we provide the proof that these scalings hold for general FLO transformations, extending the treatment of number preserving FLO transformations in \cite{Boxio18}). Thus, $U_{\mathrm{pair}}$ and $U^{(r)}_{\mathrm{resh}}$ can also be implemented with the claimed gate depth and count. \rev{Numerical experiments in \cite{Montanaro2020}, which consider the complexity of the method for passive FLO transformations, adapted from \cite{Boxio18}, suggest that the gate scaling advantage becomes apparent for systems larger than 18$\times$18 qubits due to the gate overhead incurred by the method of implementing the vertical transformations. We therefore expect, in practice, that our method exhibits advantageous gate scaling in the 2D setting for system sizes of the order of hundreds of qubits.
}

\section*{Data Availability}

The atomic coordinates of the molecular Hamiltonians used for numerical benchmarks are available in the Supplementary Information \ref{app:numerics}. The code utilized for the benchmarks is available upon request.

\section*{Acknowledgements}
 We would like to thank Ingo Roth, Martin Kliesch and Alexander Gresch for helpful discussions, as well as Katarzyna Kowalczyk-Murynka for assistance with numerical computations. The authors acknowledge financial support from the TEAM-NET project co-financed by the EU within the Smart Growth Operational Program (contract no. POIR.04.04.00-00-17C1/18-00). D.M. acknowledges support from PNRR MUR Project No. PE0000023-NQSTI. M.O. acknowledges support from QuantERA Project TouQan financed by the National Science Center No. 2023/05/Y/ST2/00140 (ERA-NET Cofund QuantERA II).

  \section*{Author contributions}  M.O. developed the initial research idea and oversaw all aspects of the project. D.M. and M.O. were involved in the construction of the joint measurement and its application to estimating Majorana pairs, quadruples and Hamiltonians. J.M., D.M. and M.O. participated in developing concrete sample efficient schemes with favourable gate depths and counts. J.M. conducted all numerical calculations. All authors contributed to the discussions, analytical calculations, and writing of the manuscript.
 
\flushcolsend

\newpage 

\bibliographystyle{apsrev4-2}

\bibliography{refs}

\renewcommand{\appendixname}{Supplementary Information}

\appendix

\onecolumngrid
\section{Number of measurement settings ensuring all quadruples are measured in 2D setting}\label{app:quadropNumbr}

Below we will show that at most seven ($7$) vertical permutations of the rectangular layout of modes considered in the main text suffice to ensure measurability of all Majorana quadruples (for the case when $L+1$  is a prime number).
Let $\mathcal{P}$ be an arrangement of $\tilde{N}=l\cdot k$ different symbols placed in square boxes arranged in an $k\times l$ rectangular pattern (note that in what follows $l$ plays the role of $L+1$ from the main text). Let 
\begin{equation}\label{eq:permVertical}
    \pi=\pi^{(1)}_{v_1} \pi_{v_2}^{(2)}\ldots \pi^{(k)}_{v_k}
\end{equation}
be a permutation consisting of $k$ independent cyclic shifts $\pi^{(i)}_{v_i}$, acting on the columns $i=1,\ldots,k$ of $\mathcal{P}$, by distance $v_i\in\{0,1,\ldots,l-1\}$ (the shifts are modulo $l$, because there are only this many elements in each column of $\mathcal{P}$). Let $\pi(\mathcal{P})$ denote the transformed version of $\mathcal{P}$ in which every box of $\mathcal{P}$ is moved according to the action of the permutation $\pi$.

\begin{lem}
    Let $l\geq 7$ be a prime number and let $k\geq 4$. Consider a permutation $\pi$ as in \eqref{eq:permVertical} chosen such that $v_i\neq v_j$ for $i\neq j$ ($i,j\in [k]$). Then for every quadruple $(p_1,p_2,p_3,p_4)$ of distinct elements in $\mathcal{P}$, the elements $p_i$ can be found in different rows of at least one of the seven arrangements $\mathcal{P},\pi(\mathcal{P}),\ldots,\pi^6(\mathcal{P})$, where $\pi^a$ denotes the $a$--th iteration of the permutation $\pi$.
    
\end{lem}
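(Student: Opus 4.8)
The plan is to track the row index of each element through the iterated shifts and reduce the statement to a counting argument modulo $l$. First I would fix coordinates: label the position of an element $p$ in $\mathcal{P}$ by a column $c_p\in[k]$ and a row $r_p\in\{0,1,\ldots,l-1\}$, reading rows modulo $l$. Since $\pi$ acts on column $i$ as a cyclic shift by $v_i$, and iterating $\pi$ a total of $a$ times shifts that column by $a\,v_i \pmod l$ while never mixing columns, the row of $p$ in the arrangement $\pi^a(\mathcal{P})$ is the affine function $r_p + a\,v_{c_p} \pmod l$. Writing $r_i:=r_{p_i}$ and $w_i:=v_{c_{p_i}}$ for the chosen quadruple, the four elements occupy four distinct rows in $\pi^a(\mathcal{P})$ precisely when no pair collides, i.e. when $r_i + a w_i \not\equiv r_j + a w_j \pmod l$ for every pair $\{i,j\}$.

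Next I would analyse, for each of the $\binom{4}{2}=6$ pairs, the set of shift counts $a$ that force a collision. The collision condition rearranges to $a\,(w_i-w_j)\equiv r_j-r_i \pmod l$. If $p_i,p_j$ lie in the same column then $w_i=w_j$, and because distinct elements of a column have distinct rows ($r_i\neq r_j$ in $\{0,\ldots,l-1\}$) the condition is never satisfied, so such a pair contributes no bad $a$. If $p_i,p_j$ lie in different columns, the hypothesis $v_i\neq v_j$ for $i\neq j$ gives $w_i\neq w_j$; since both lie in $\{0,\ldots,l-1\}$ their difference is a nonzero residue modulo the prime $l$, hence invertible, and the collision equation has exactly one solution $a\equiv (r_j-r_i)(w_i-w_j)^{-1}\pmod l$. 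Thus each pair excludes at most one residue class of $a$ modulo $l$.

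Finally I would collect over all six pairs: the set of ``bad'' shift counts $a$ (those for which some pair collides) contains at most six residues modulo $l$. The seven candidate arrangements correspond to $a\in\{0,1,\ldots,6\}$, and because $l\geq 7$ these seven values are pairwise distinct modulo $l$. A set of seven distinct residues cannot be covered by at most six bad residues, so at least one $a\in\{0,\ldots,6\}$ is good, meaning the four elements sit in distinct rows of $\pi^a(\mathcal{P})$, which is exactly the claim.

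I do not expect a serious obstacle: the content reduces to the observation that, modulo a prime, any two elements in distinct columns coincide in row for a single value of the iteration count, so six pairs can spoil at most six of the seven sampled arrangements. The only points requiring care are confirming that vertical shifts preserve columns (so the row map is the affine function $a\mapsto r_p + a v_{c_p}$ with no cross-column interference) and that same-column pairs never collide; both are immediate. Primality of $l$ enters only to guarantee invertibility of $w_i-w_j$, and the bound $l\geq 7$ is precisely what makes the sampled values $0,\ldots,6$ distinct modulo $l$ so that the pigeonhole count closes.
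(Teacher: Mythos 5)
Your argument is correct and follows essentially the same route as the paper's proof: track row positions as affine functions of the iteration count modulo the prime $l$, observe that each of the six pairs collides for at most one residue, and conclude by pigeonhole over the seven sampled values. You are in fact slightly more explicit than the paper on two points it leaves implicit---the same-column case (where no collision can ever occur) and the role of $l\geq 7$ in keeping $0,\ldots,6$ distinct modulo $l$.
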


\begin{proof}
    To every quadruple $\bm{p}=(p_1,p_2,p_3,p_4)$ we can assign four vertical positions $y_1,y_2,y_3,y_4\in [l]$  of the elements $p_i$ in the array. We will assume that elements of $\bm{p}$ belong to four distinct columns of $\mathcal{P}$ and furthermore, without loss of generality, that element $p_i$ belongs to the $i$--th column (the proof for the case when some $p$'s belong to the same column is analogous). After $r=0,1,\ldots,$ applications of  $\pi$ to $\bm{p}$, the vertical positions of $y_i$ take the form
    \begin{equation}
        y_i (r)= y_i(0) +r\cdot v_i \ \mod l\ . 
    \end{equation}
    The evolution of the differences between the vertical positions $\Delta_{ij}=y_i-y_j$ is analogous:
    \begin{equation}\label{eq:diferencesEVOLUTION}
    \Delta_{ij}(r)= \Delta_{ij}(0) + r\cdot v_{ij} \ \mod l\ ,
    \end{equation}
where $v_{ij}=v_i-v_j \neq 0$. Note that elements $\pi^r(\bm{p})$ are in distinct rows of $\pi^r(\mathcal{P})$ if and only if $\Delta_{ij}(r) \neq 0$ for all $i>j$. Therefore we need to ensure that for at least one $r=0,1,\ldots,6$ we have $\Delta_{ij}(r)\neq 0$ for all $i>j$. This follows by trying to find the ``time'' $r_{ij}$ that collisions between elements $p_i$ and $p_j$ take place: $\Delta_{ij}(r_{ij})=0$. Since this is a linear equation for $r_{ij}$, and $l$ was set to be a prime number, it has a unique solution $r_{ij}^\ast \in\{0,1,\ldots,l-1\}$. Note that since we are looking at four elements in $\bm{p}$, there are only $\binom{4}{2}=6$ equations $\Delta_{ij}(r_{ij})=0$ to consider. Each one has a unique solution $r_{ij}^\ast$ and there are at most 6 of them. Therefore, for at least one $r\in\{0,1,\ldots,6\}$  we have $\Delta_{ij}(r) \neq 0$ for all $i>j$.
\end{proof}

\section{Four measurement rounds are sufficient to estimate Hamiltonians}\label{app:4rounds}

We now show that the measurement scheme defined in Sec. \ref{sec:physical_hamiltonians} (see also Fig. \ref{fig:rounds}) guarantees that every $\gamma_A$ in the decomposition of $H$ (i.e. $A\in\mathcal{X}_2\cup\mathcal{X}_4$) is measured with visibility $\eta_A=\Theta(N^{-|A|/4})$. Following similar arguments made in Prop. \ref{prop:Randomized}, the observable $\gamma_A$ is jointly measured in the $r$--th measurement round (with the required visibility) if $A\in\mathcal{M}^{(r)}$, i.e. the elements of $A$ lie in distinct clusters associated with the unitary $U^{(r)}$. Therefore, it is sufficient to prove the following:
\begin{lem}
For each $A\in\mathcal{X}_2\cup\mathcal{X}_4$, $\exists\, r\in \{1,2,3,4\}$ such that $A\in\mathcal{M}^{(r)}$.
\end{lem}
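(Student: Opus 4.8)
The plan is to reduce the statement to a purely combinatorial fact about how even- and odd-labelled Majorana modes are distributed among the clusters across the four rounds. First I would dispose of the pairs: any $A\in\mathcal{X}_2$ consists of one even-index and one odd-index Majorana, which by construction always sit in an even cluster and an odd cluster respectively. Since the even and odd clusters are disjoint in every round, and the reshufflings preserve the parity of each mode, such an $A$ lies in $\mathcal{M}^{(r)}$ for all $r$, in particular $\mathcal{M}^{(1)}$. The real content is therefore the quadruples $A=\{e_1,e_2,o_1,o_2\}\in\mathcal{X}_4$, with $e_1,e_2$ even and $o_1,o_2$ odd. Because $\sigma^{(r)}$ moves even modes only among even clusters and odd modes only among odd clusters, the condition $A\in\mathcal{M}^{(r)}$ splits into two independent requirements: in round $r$ the pair $\{e_1,e_2\}$ must lie in two distinct even clusters \emph{and} $\{o_1,o_2\}$ in two distinct odd clusters.

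Next I would record the key observation that the even and odd cluster assignments vary \emph{independently} over the four rounds. Writing $\mathrm{A},\mathrm{B}$ for the even-cluster partitions before and after the even shift, and $\mathrm{A}',\mathrm{B}'$ for the odd partitions before and after the odd shift, the rounds realise the four combinations $(\mathrm{A},\mathrm{A}')$, $(\mathrm{B},\mathrm{A}')$, $(\mathrm{A},\mathrm{B}')$, $(\mathrm{B},\mathrm{B}')$ for $r=1,2,3,4$; indeed $\sigma^{(2)}$ shifts only the even columns, $\sigma^{(3)}$ only the odd columns, and $\sigma^{(4)}=\sigma^{(2)}\circ\sigma^{(3)}$ does both. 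It thus suffices to prove the \textbf{separation lemma}: the two even partitions $\mathrm{A},\mathrm{B}$ together separate every pair of distinct even modes (and, by the identical argument, $\mathrm{A}',\mathrm{B}'$ separate every pair of odd modes). Granting this, for the given quadruple I choose an even partition in $\{\mathrm{A},\mathrm{B}\}$ that separates $\{e_1,e_2\}$ and an odd partition in $\{\mathrm{A}',\mathrm{B}'\}$ that separates $\{o_1,o_2\}$; the chosen combination is exactly one of the four rounds, which then contains $A$ in its $\mathcal{M}^{(r)}$.

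To prove the separation lemma I would coordinatise the $L^2$ even modes as a grid $(x,y)\in\mathbb{Z}_L\times\mathbb{Z}_L$, the clusters of partition $\mathrm{A}$ being the rows $\{y=\mathrm{const}\}$. The even shift sends column $x$ cyclically downwards by its prescribed offset $f(x)$ (here $f(\ell)=\ell-1$), so in partition $\mathrm{B}$ two modes share a cluster iff $y_1+f(x_1)\equiv y_2+f(x_2)\pmod L$. If a pair of distinct modes were separated by \emph{neither} partition, then simultaneously $y_1=y_2$ and $y_1+f(x_1)\equiv y_2+f(x_2)\pmod L$, forcing $f(x_1)\equiv f(x_2)\pmod L$; since $f$ is injective modulo $L$ on the $L$ columns this gives $x_1=x_2$, so the two modes coincide, a contradiction. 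The same computation applies verbatim to the odd modes, establishing the lemma.

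The only genuinely delicate point is the decoupling used in the first two paragraphs: one must verify that the baseline cluster structure fixed by $U_{\mathrm{pair}}$ and $U_{\mathrm{sup}}$, together with the fact that $\sigma^{(2)}$ and $\sigma^{(3)}$ act on disjoint (even/odd) sets of modes, really does produce the clean $2\times2$ table of configurations, so that the even pair and the odd pair can be separated \emph{in one common round} rather than only in separate rounds. Once that independence is in place, the arithmetic of the separation lemma is elementary and, as the argument shows, valid for every $L$ --- no primality assumption is required here, in contrast to the seven-round statement of Appendix \ref{app:quadropNumbr}, precisely because the chemistry structure reduces each parity class to separating \emph{pairs} rather than quadruples.
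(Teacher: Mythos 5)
Your proof is correct and follows essentially the same route as the paper's: pairs are handled trivially by the even/odd cluster split, and quadruples reduce to a case analysis over which parity-pair collides in round 1, with the four rounds realizing all combinations of the two even and two odd partitions. The one substantive addition is that your explicit ``separation lemma'' (injectivity of the column shift $\ell\mapsto\ell-1$ mod $L$, so the shifted partition separates any pair the original partition fails to separate) rigorously establishes what the paper dismisses as ``easy to check due to the vertical shifts,'' and your remark that no primality of $L$ is needed here --- unlike the seven-setting argument for general quadruples --- is accurate and worth keeping.
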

\begin{proof}
The relevant quadratic observables $\gamma_A$ contain both an even and odd Majorana term. In every round, the even (odd) term belongs in an even (odd) cluster, therefore $A\in\mathcal{M}^{(r)}$ for all $r=1,\ldots,4$. The relevant quartic observables $\gamma_A$ contain two even and two odd terms, therefore $A=\{i,j,k,l\}\in\mathcal{M}^{(r)}$ if $i,j,k,l$ lie in two even and two odd clusters. In the first round, $A\in\mathcal{M}^{(1)}$ if the set $A$ has no even or odd pairs in the same cluster (e.g. $\gamma_{1,2,7,8}$ from Fig. \ref{fig:rounds}, left). If $A\notin\mathcal{M}^{(1)}$ there are three cases to consider: (a) the odd pair in $A$ (but not the even pair) is in one cluster (e.g. $\gamma_{1,2,3,8}$); (b) the even pair (but not the odd pair) is in one cluster (e.g. $\gamma_{1,2,4,7}$); (c) the odd and even pairs are in the same odd and even clusters (e.g. $\gamma_{1,2,3,4}$). It is easy to check---due to the vertical shifts applied in each round (see Fig. \ref{fig:rounds})---that when case (a) applies, $A\in\mathcal{M}^{(2)}$; when case (b) applies, $A\in\mathcal{M}^{(3)}$; and when case (c) applies, $A\in\mathcal{M}^{(4)}$. Hence, every quadruple $A\in\mathcal{X}_4$ is an element of at least one $\mathcal{M}^{(r)}$.
\end{proof}

\section{Details of the strategy for physical fermionic Hamiltonians}
\label{sec:details_strategy_physical_ham}
Here we discuss the details of the postprocessing in the strategy targeting physical Hamiltonians, described in Sec. \ref{sec:physical_hamiltonians}. Specifically we explain how the outcome $\e^{(r)}_A$ needed to define the estimator $\hat{\gamma}^{(r)}_A$ in Eq. \eqref{eq:estimator_av} is obtained from the measurement outcome in step (iii) of the general strategy.

Recall that the outcome $e^{(r)}_A=s_{AB}(-1)^{|A\cap X|} \prod_{j\in B'} q_i$ and visibility $\eta^{(r)}_A=|\det(R^{(r)}_{A,B})|$ implicitly depend on $B'\subset[N]$, which labels the subset of Majorana pairs $B$ used in step (iii) of the general protocol. For each $A$, we choose a unique $B'$ (and thus also $B$) to maximize $\eta^{(r)}_{A}$. For pairs there is one obvious choice: if $A=\{i,j\}$ with $i\in E^{(r)}_\alpha$, $j\in O^{(r)}_\beta$, and $(p,q)$ is the coupling between the two clusters, then $B=\pi^{-1}\circ (\sigma^{(r)})^{-1}(p,q)$. For quadruples there is an extra freedom: if $A=\{i,j,k,l\}$ has elements in the clusters $E_\alpha, E_{\alpha'},O_{\beta},O_{\beta'}$, we can choose the couplings of $(E^{(r)}_{\alpha},O^{(r)}_{\beta})$ and $(E^{(r)}_{\alpha'},O^{(r)}_{\beta'})$ or, alternatively, $(E^{(r)}_{\alpha},O^{(r)}_{\beta'})$ and $(E^{(r)}_{\alpha'},O^{(r)}_{\beta})$. In this scenario, we select the two couplings $(p,q,r,s)$ that maximize $\eta^{(r)}_A$ and set $B=\pi^{-1}\circ(\sigma^{(r)})^{-1}(p,q,r,s)$. 

\section{Variance of the Hamiltonian estimator}\label{app:variance}
We can write the estimator defined in \eqref{eq:hamiltonian_estimator} as $\hat H=\sum_r\hat H^{(r)}$, where $\hat H^{(r)}=\sum_A\alpha_A^{(r)}h_A\hat\gamma_A^{(r)}$ involves estimators of the relevant Majorana observables from the $r$--th round. Each round is independent, therefore 
\begin{align}
\label{eq:variance_estimator_physical}
\mathrm{Var}[\hat H]&=\sum_{r=1}^4\left(\ev{(\hat H^{(r)})^2}-\left(\ev{\hat H^{(r)}}\right)^2\right)\\
&=\sum_{r=1}^4 \left(\sum _{A,{A'} \in \mathcal{X}_2\cup\mathcal{X}_4} \alpha_A^{(r)} \alpha_{A'}^{(r)} h_A h_{A'} \left(\ev{\hat{\gamma}_A^{(r)} \hat{\gamma}_{A'}^{(r)}} - \ev{\hat{\gamma}_A^{(r)}}\ev{\hat{\gamma}_{A'}^{(r)}}\right)\right)\\
&=\sum_{r=1}^4 \left(\sum _{A,{A'} \in \mathcal{M}^{(r)}} \alpha_A^{(r)} \alpha_{A'}^{(r)} h_A h_{A'} \left(\ev{\hat{\gamma}_A^{(r)} \hat{\gamma}_{A'}^{(r)}} - \tr(\gamma_A\rho)\tr(\gamma_{A'}\rho)\right)\right)\,,\label{eq:molecular_variance}
\end{align}
where $\mathcal{M}^{(r)}\subset\mathcal{X}_2\cup\mathcal{X}_4$ is the subset of Majorana terms consistent with $U^{(r)}$. Following the calculation presented in Appendix I.2 in \cite{MCO2024} we have,
\begin{equation}\label{eq:covariance_gamma_A_gamma_B}
 \ev{\hat{\gamma}_A^{(r)}\hat{\gamma}_{A'}^{(r)}} =\begin{cases}
\delta_{|A \vartriangle A'|,|f(A)\vartriangle f(A')|}\frac{\nu^{(r)}_{A\vartriangle A'}}{\nu^{(r)}_A\nu^{(r)}_{A'}}\tr(\rho\,\gamma_{A\vartriangle A'}) &\mbox{if} \,\,\, A\neq A' \\
(1/\nu^{(r)}_A)^2  &\mbox{otherwise} \,,
 \end{cases}
\end{equation}
where $\nu^{(r)}_{A}=\det(R^{(r)}_{A,f(A)})$, $\nu^{(r)}_{A'}=\det(R^{(r)}_{A',f(A')})$ and $\nu^{(r)}_{A\triangle A'}=\det(R^{(r)}_{A\vartriangle A',f(A)\vartriangle f(A')})$. The matrix $R^{(r)} \in O(2N)$ corresponds to the FLO unitary $U^{(r)}$ implemented in the $r$--th round and $A\vartriangle A'$ denotes the symmetric difference of $A$ and $A'$, i.e., $A\vartriangle A'=(A\cup A')\setminus (A\cap A')$. We have introduced the notation $f(A)\subset[2N]$ to label the subset of Majorana pairs that are measured in step (iii) of the general joint measurement protocol. For each $A$, the set $f(A)$ implicitly depends on $r$, and is chosen uniquely (for a given $r$) to optimize $\nu_A^{(r)}$, as described in Sec. \ref{sec:physical_hamiltonians}.

\section{Optimization of coefficients in estimator for physical fermionic Hamiltonians}
\label{app:optimization}

For any given Hamiltonian and state, we can optimize the coefficients $\alpha^{(r)}_A$ which appear in the estimator \eqref{eq:hamiltonian_estimator} of the Hamiltonian in order to minimize the variance \eqref{eq:variance_estimator_physical}. Let $h$ be the number of terms in the Hamiltonian \eqref{eq:chemistryHamiltonian} with non-zero coefficients $h_A$. For each round $r$, let $\boldsymbol{\alpha}^{(r)} = (\alpha^{(r)}_{A_1}, \alpha^{(r)}_{A_2}, ..., \alpha^{(r)}_{A_h})$ be a vector of dimension $h$ with indices $A_i \in \mathcal{X}_2 \cup \mathcal{X}_4 $ such that $ h_{A_i} \neq 0$, and let $\mathbf{C}^{(r)}$ denote the $h \times h$ matrix with entries $C^{(r)}_{A_i, A_j} = h_{A_i} h_{A_j} \,\, \mathrm{Cov}(\hat{\gamma}_{A_i}^{(r)}\hat{\gamma}_{A_j}^{(r)})$, where $\mathrm{Cov}(\hat{\gamma}_{A_i}^{(r)}\hat{\gamma}_{A_j}^{(r)}) = \ev{\hat{\gamma}_{A_i}^{(r)}\hat{\gamma}_{A_j}^{(r)}} - \ev{\hat{\gamma}_{A_i}^{(r)}}\ev{\hat{\gamma}_{A_j}^{(r)}}$. We assume the same ordering of indexing terms $A_1, ... , A_h$ for $\boldsymbol{\alpha}^{(r)}$ and $\mathbf{C}^{(r)}$. The variance \eqref{eq:variance_estimator_physical} can then be expressed as
\begin{equation}
\label{eqn:variance_matrix_form}
    \mathrm{Var}[\hat{H}] = \sum_{r=1}^4 \left(\boldsymbol{\alpha}^{(r)}\right) ^{\mathrm{T}} \mathbf{C}^{(r)} \boldsymbol{\alpha}^{(r)}\,.
\end{equation}
We now express the optimal coefficients $\boldsymbol{\alpha}^{(r)}_{\mathrm{opt}}$ in terms of the matrix $\mathbf{D}^{(r)} = \frac{1}{2}\left( \mathbf{C}^{(r)} \right) ^{-1}$, where the inverse is only taken on the submatrix $\mathbf{C}^{(r)}|_{\mathcal{M}^{(r)}}$ (which we assume to be invertible) and the remaining rows and columns are left to be zero.
\begin{lem}
The coefficients in the estimator \eqref{eq:hamiltonian_estimator} which minimize the variance \eqref{eq:variance_estimator_physical} are given by
\begin{equation}
\label{eq:optimized_alphas}
    \boldsymbol{\alpha}^{(r)}_{\mathrm{opt}} = \mathbf{D}^{(r)} \left( 
\sum_{r=1}^4 \mathbf{D}^{(r)} \right)^{-1} \mathbf{1},
\end{equation}
where $\mathbf{1}$ is the vector of ones, of dimension $h$.
\end{lem}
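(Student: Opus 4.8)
The plan is to read the minimization of the variance \eqref{eqn:variance_matrix_form} as a convex quadratic program with linear equality constraints and to solve it by Lagrange multipliers. The objective $\sum_{r=1}^4 (\boldsymbol{\alpha}^{(r)})^{\mathrm{T}}\mathbf{C}^{(r)}\boldsymbol{\alpha}^{(r)}$ is a sum of quadratic forms in covariance matrices, hence positive semidefinite; restricted to the block $\mathbf{C}^{(r)}|_{\mathcal{M}^{(r)}}$ (assumed invertible, and therefore positive definite), it is strictly convex on the feasible affine subspace. Thus any stationary point of the Lagrangian is automatically the unique global minimizer, and it suffices to locate it.

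First I would encode the unbiasedness constraints. For each term $A$ with $h_A\neq 0$ the requirement $\sum_{r:A\in\mathcal{M}^{(r)}}\alpha_A^{(r)}=1$ becomes, in the vector notation of the lemma and with the convention $\alpha_A^{(r)}=0$ whenever $A\notin\mathcal{M}^{(r)}$ (so such components do not enter the objective, $\mathbf{C}^{(r)}$ being supported on the $\mathcal{M}^{(r)}$ block), the single relation $\sum_{r=1}^4\boldsymbol{\alpha}^{(r)}=\mathbf{1}$. Introducing a multiplier vector $\boldsymbol{\lambda}\in\mathbb{R}^h$, I form the Lagrangian
\begin{equation}
\mathcal{L}=\sum_{r=1}^4 \left(\boldsymbol{\alpha}^{(r)}\right)^{\mathrm{T}}\mathbf{C}^{(r)}\boldsymbol{\alpha}^{(r)}-\boldsymbol{\lambda}^{\mathrm{T}}\left(\sum_{r=1}^4\boldsymbol{\alpha}^{(r)}-\mathbf{1}\right).
\end{equation}
Differentiating with respect to the free entries of $\boldsymbol{\alpha}^{(r)}$ (those indexed by $\mathcal{M}^{(r)}$) gives $2\,\mathbf{C}^{(r)}\boldsymbol{\alpha}^{(r)}=\boldsymbol{\lambda}$ on that block, whence $\boldsymbol{\alpha}^{(r)}=\tfrac{1}{2}(\mathbf{C}^{(r)})^{-1}\boldsymbol{\lambda}=\mathbf{D}^{(r)}\boldsymbol{\lambda}$. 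Here the definition of $\mathbf{D}^{(r)}$ as the half-inverse padded by zeros outside $\mathcal{M}^{(r)}$ does double duty: it makes $\mathbf{D}^{(r)}\boldsymbol{\lambda}$ depend only on $\boldsymbol{\lambda}|_{\mathcal{M}^{(r)}}$ and automatically enforces $\alpha_A^{(r)}=0$ for $A\notin\mathcal{M}^{(r)}$. Summing over $r$ and imposing $\sum_r\boldsymbol{\alpha}^{(r)}=\mathbf{1}$ yields $\big(\sum_r\mathbf{D}^{(r)}\big)\boldsymbol{\lambda}=\mathbf{1}$, so $\boldsymbol{\lambda}=\big(\sum_r\mathbf{D}^{(r)}\big)^{-1}\mathbf{1}$, and back-substitution gives the claimed $\boldsymbol{\alpha}^{(r)}_{\mathrm{opt}}=\mathbf{D}^{(r)}\big(\sum_r\mathbf{D}^{(r)}\big)^{-1}\mathbf{1}$.

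The differentiation is immediate, so the step needing care — and the main obstacle — is the bookkeeping around the restricted blocks: I must confirm that the zero-padding built into $\mathbf{D}^{(r)}$ is consistent with the convention $\alpha_A^{(r)}=0$ off $\mathcal{M}^{(r)}$, and that $\sum_r\mathbf{D}^{(r)}$ is invertible so that $\boldsymbol{\lambda}$ is well defined. For the latter, each $\mathbf{D}^{(r)}$ is positive semidefinite with $\boldsymbol{v}^{\mathrm{T}}\mathbf{D}^{(r)}\boldsymbol{v}=0$ exactly when $v_A=0$ for all $A\in\mathcal{M}^{(r)}$; hence $\boldsymbol{v}^{\mathrm{T}}\big(\sum_r\mathbf{D}^{(r)}\big)\boldsymbol{v}=0$ forces $v_A=0$ for every $A\in\bigcup_r\mathcal{M}^{(r)}$, which by the four-rounds lemma of Appendix \ref{app:4rounds} is the full set of terms with $h_A\neq 0$, so $\boldsymbol{v}=\mathbf{0}$ and $\sum_r\mathbf{D}^{(r)}$ is positive definite. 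I would close by invoking convexity of the objective on the feasible affine subspace to upgrade this unique stationary point to the global minimizer, completing the proof.
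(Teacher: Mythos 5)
Your proof is correct and follows essentially the same route as the paper: the same Lagrangian with multipliers for the unbiasedness constraints, the same stationarity condition $2\mathbf{C}^{(r)}\boldsymbol{\alpha}^{(r)}=\boldsymbol{\lambda}$ on the $\mathcal{M}^{(r)}$ block, back-substitution into the constraint, and positive semidefiniteness of $\mathbf{C}^{(r)}$ to certify a minimum. The only addition beyond the paper's argument is your explicit verification that $\sum_r\mathbf{D}^{(r)}$ is invertible via the four-rounds lemma, a detail the paper leaves implicit.
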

\begin{proof}
This can be seen by the constrained minimization of \eqref{eqn:variance_matrix_form} under the constraint $\sum_{r=1}^4 \alpha_A^{(r)} = 1$ for each $A$, or equivalently, $\sum_{r=1}^4 \boldsymbol{\alpha}^{(r)} = \mathbf{1}$.
The problem can be solved by minimizing 
\begin{equation}
\label{eqn:lagrangian}
    \mathcal{L} = \sum_{r=1}^4 \left(\boldsymbol{\alpha}^{(r)}\right) ^{\mathrm{T}} \mathbf{C}^{(r)} \boldsymbol{\alpha}^{(r)} - \sum_A \mu_A \left( \sum_{r=1}^4 \alpha_A^{(r)} - 1\right),
\end{equation}
where $\mu_A$ are Lagrange multipliers.
At the minimum, we require
\begin{equation}
    \label{eq:lagrangian_derivative}
     \frac{\partial \mathcal{L}}{\partial \alpha_A^{(r)}} = 2 \sum_{A'} \alpha_{A'}^{(r)} 
C^{(r)}_{A, A'} - \mu_A=0\,,
\end{equation}
for every $A$ and $r$. Solving the above equation for $\alpha^{(r)}_A$ yields 
\begin{equation}
\label{eqn:lagrange_alpha_of_mu}
    \boldsymbol{\alpha}^{(r)} = \mathbf{D}^{(r)} \, \boldsymbol{\mu}\,,
\end{equation}
where $\boldsymbol{\mu}$ denotes the vector whose components are the $h$ Lagrange multipliers, with ordering consistent with $\boldsymbol{\alpha}^{(r)}$. Substituting this into the original constraint we obtain $\sum_r \mathbf{D}^{(r)} \boldsymbol{\mu} = \mathbf{1}$ and therefore $\boldsymbol{\mu} = (\sum_r \mathbf{D}^{(r)})^{-1} \mathbf{1}$. A further substitution into \eqref{eqn:lagrange_alpha_of_mu} gives \eqref{eq:optimized_alphas}. 

Note that \eqref{eq:lagrangian_derivative} correctly defines the minimum of the variance \eqref{eqn:variance_matrix_form}, since  $\mathbf{C}^{(r)}=\mathrm{ diag(\mathbf{h})}^T\cdot \mathbf{Cov}^{(r)} \cdot\mathrm{diag(\mathbf{h})}$ is positive-semidefinite. Here, $\mathrm{diag}(\mathbf{h})$ is the diagonal matrix with entries $h_A$, and the covariance matrix $\mathbf{Cov}^{(r)}$, with entries $\mathrm{Cov}(\hat{\gamma}_{A_i}^{(r)}\hat{\gamma}_{A_j}^{(r)})$, is always positive-semidefinite.
 Furthermore, the value of $\boldsymbol{\alpha}^{(r)}$ at which \eqref{eq:lagrangian_derivative} is satisfied is the only viable candidate for the minimum of \eqref{eqn:variance_matrix_form} under the given constraint. In general, a minimum at $\boldsymbol{\alpha}^{(r)}$ could also be found when the gradient of the constraint function is zero (which in our problem is never the case) or when the gradient of the minimized function itself \eqref{eqn:variance_matrix_form} is zero (which in our case is $\alpha_A^{(r)} = 0$ for any $A,r$, but this does not satisfy the constraint).
\end{proof}

\begin{table}[h]
\begin{tabular}{|c|c|c|c|c|c|c|}
\hline
\begin{tabular}[c]{@{}c@{}}Molecule\\ (qubits)\end{tabular} &  \begin{tabular}[c]{@{}c@{}}Uniform\\ coefficients \end{tabular}  &  \begin{tabular}[c]{@{}c@{}}Optimized\\ coefficients \end{tabular} \\ \hline
H2 (8)                                                      &  341   & 26.6                                             \\ \hline
H2 (8), almost Hadamard matrix                                                      &  63.3   & 49.5                                            \\ \hline
LiH (12)                                                    &  128   & 106                                              \\ \hline
BeH2 (14)                                                   &  281   & 245                                              \\ \hline
H2O (14)                                                    &  1620 & 1492                                               \\ \hline
NH3 (16)                                                    & 1157 & 1061                                              \\ \hline
\end{tabular}
\caption{\textbf{Comparison of variances for estimating energies of standard benchmark molecules using the joint measurability strategy.} \\ The coefficients $\alpha^{(r)}_A$ in Eq. \eqref{eq:hamiltonian_estimator} are either uniform \eqref{eqn:uniform_coefficients} or optimized \eqref{eq:optimized_alphas} for the ground state of the Hamiltonian. The variances are calculated for the ground states of five molecular Hamiltonians, with units Ha$^2$ (see also Table \ref{tab:benchmark} of the main text which summarizes only the optimized coefficients). Except for H2, in which both lower-flat \cite{Jaming15} and almost Hadamard \cite{banica12} matrices are used, we typically apply the lower-flat orthogonal matrices defined in Appendix B.1. of \cite{MCO2024}.}
\label{tab:benchmark_appendix}
\end{table}

Table \ref{tab:benchmark} of the main text summarizes the variance of the estimator, with $\alpha^{(r)}_A$ optimized via \eqref{eq:optimized_alphas} for the ground state of the Hamiltonian, obtained numerically. However, since this optimization depends on the state, it may not always be practically applicable, e.g. if one has no information about the state to be measured. Moreover, for more complex molecules, the optimization appears not to provide a significant advantage over uniform coefficients. Uniform coefficients have the form
\begin{equation}
\label{eqn:uniform_coefficients}
    \alpha_A^{(r)}= \begin{cases} 1/f \, &\mbox{if} \, A\in \mathcal{M}^{(r)}\\
    0 \, &\mbox{otherwise},
    \end{cases}
\end{equation} where $f$ is the number of times $A$ appears in $\{\mathcal{M}^{(r)}\}_{r=1}^4$ and $\mathcal{M}^{(r)}$, first defined in Sec. \ref{sec:constructionOFunitaries}, is the set of terms $A$ which are estimated in $r$-th round.

In Table \ref{tab:benchmark_appendix} we present a comparison of the variances for benchmark Hamiltonians using both optimized and uniform coefficients. For H2 we test both a lower-flat orthogonal matrix \cite{Jaming15} (see also \cite{MCO2024}) and an almost Hadamard matrix \cite{banica12}, with $-\frac{1}{3}$ on the diagonal and $\frac{2}{3}$ elsewhere. The choice of orthogonal matrix, and its corresponding unitary $U^{(r)}$, significantly affects the variance of the estimator.

\section{Details of the numerical benchmarks on molecular Hamiltonians}
\label{app:numerics}
The variance of our estimator for molecular Hamiltonians, as presented in Table \ref{tab:benchmark}, is calculated in the following way. We obtain the Hamiltonian, represented as a linear combination of Majorana operators, from OpenFermion \cite{McClean_2020} interfaced with the PySCF quantum chemistry package. The orbital bases used are the 6-31G basis for the H2 molecule and the STO-3G basis for the remaining molecules, as is standard in the literature (see e.g. \cite{Zhao21}). To obtain the ground state we represent the Hamiltonian in the Pauli basis via the Jordan-Wigner encoding. For efficiency we use Qiskit's \cite{qiskit2024} sparse matrix representation of the Hamiltonian and numerically find the vector corresponding to the smallest eigenvalue using scipy's function "eigsh" with the default solver. To calculate the variance \eqref{eq:molecular_variance}, it is necessary to evaluate expectation values of Majorana monomials (see Eq. \eqref{eq:covariance_gamma_A_gamma_B}) on the ground state. For this purpose we employ Qiskit's efficient calculation of expectation values on statevectors. 
We consider Hamiltonians for molecules with atoms around equilibrium distances. The coordinates (in units \AA) of the atoms are as follows:

H2: H: 0.0, 0.0, 0.0; H: 0.0, 0.0, 0.7414,

LiH: Li: 0.0, 0.0, 0.0; H: 0.0, 0.0, 1.5949,

BeH2: Be: 0.0, 0.0, 0.0; H: 0.0, 0.0, 1.3264; H: 0.0, 0.0, -1.3264,

H2O: H: 0.0, 0.7572, -0.4692; H: 0.0, -0.7572, -0.4692; O: 0.0, 0.0, 0.1173,

NH3: N: 0.0, 0.0, 0.0; H: 0.0, -0.9377, -0.3816; H: 0.8121, 0.4689, -0.3816; H: -0.8121, 0.4689, -0.3816.

The orthogonal lower-flat matrices $R^{(\alpha)}$ used to construct the orthogonal matrix \eqref{eq:SuperposingOrthogonal}, which realizes the superposition of modes within their clusters, are constructed according to the recipe in Appendix B.1. of \cite{MCO2024}. For a given system size we use the same matrix for any $\alpha$.

\rev{
\section{Details on the numerical simulation of the scheme for fermionic Gaussian states}
\label{app:gaussian_states_simulation}

\begin{algorithm}[]
\label{alg:sampling}
 \KwData{number of samples $s$;
 
 covariance matrix $C_\rho$ of Gaussian state $\rho$;
 
 orthogonal matrices $R_U^{(r)}$ ($r \in [4]$) defining FLO transformations $U^{(r)}$ of the rounds from Section \ref{sec:physical_hamiltonians}.} 
 
 \KwResult{$s \times 4$ samples $\{ 
\bm{q}_j^{(r)} \} \, (j \in [s], r \in [4]$), $\bm{q}=(q_1,q_2,\ldots, q_N)$, $q_i = \pm 1$ \Comment*{(step (iii) from Section \ref{sec:Genprotocol})}

 bit-strings $\{X^{(r)}_j\} \, (j \in [s], r \in [4])$ corresponding to random Majoranas drawn for each sample.
}
 
\For{$j\gets1$ \KwTo $s$}
{
\For{$r \gets 1$ \KwTo $4$}
{
  $C \gets C_\rho$ \Comment*{initialize a covariance matrix of the input state}
  
  draw random subset $X \subseteq 2N$ defining Majorana $\gamma_X$\;

  $X^{(r)}_j \gets X$\;

  construct $R_X$  with matrix elements $R_{X, kl} = -1 ^ {|X|+|X \cap \{k\}|}\delta_{kl}$, s.t. $\forall_{k \in [2N]}$ $\gamma_X \gamma_k \gamma_X^\dagger = \sum_l R_{X, kl} \gamma_l$; 

  $C \gets R_U^{(r)T} R_X^T C R_X R_U^{(r)}$ \Comment*{ evolve the state with $\gamma_X^\dagger$ and $U^\dagger$}

  $\bm x \gets\mathrm{FLOYao.sample}(C)$ ($\bm x = (x_1, x_2, ...., x_N), x\in \{0, 1\}$)  \Comment*[r]{measure in computational basis}
  
  $q_i \gets 1 - 2 x_i$, where  $\bm{q}_j^{(r)} = (q_1, q_2, ..., q_N); $ 
 }
 }
 \caption{Classical simulation of the sampling procedure for estimating physical Hamiltonians (Section \ref{sec:physical_hamiltonians})}
\end{algorithm}

\newpage

\begin{algorithm}[]
\label{alg:postprocessing}
\KwData{$s \times 4$ samples $\{ 
\bm{q}_j^{(r)} \} \, (j \in [s], r \in [4]$), $\bm{q}=(q_1,q_2,\ldots, q_N)$, $q_i = \pm 1$;

orthogonal matrices $R_U^{(r)}$ ($r \in [4]$) defining FLO transformations $U^{(r)}$ of the rounds from Section \ref{sec:physical_hamiltonians};

bit-strings $\{X^{(r)}_j\} \, (j \in [s], r \in [4])$ corresponding to random Majoranas drawn for each sample;

Hamiltonian $H=\sum_{A\in \mathcal{X}_{2}\cup \mathcal{X}_{4}} h_A \gamma_A\ $;

coefficients $\boldsymbol{\alpha}^{(r)} = (\alpha^{(r)}_{A_1}, \alpha^{(r)}_{A_2}, ..., \alpha^{(r)}_{A_h})$ \Comment*[r]{uniform or optimized, see Supplementary Information \ref{app:optimization}}
}

\KwResult{Empirical estimator of Hamiltonian $\hat{\bar{H}}$ and variance $\mathrm{Var}(\hat{\bar{H}})$.}

\For{$j\gets1$ \KwTo $s$}
{
\For{$r \gets 1$ \KwTo $4$}
{

\For{$A \in \mathcal X_2 \cup \mathcal X_4$}
{

\If{$A \in \mathcal{M}^{(r)}$}
{
choose $B' \subset [N], B \subset [2N]$ according to Supplementary Information \ref{sec:details_strategy_physical_ham}; 

$\eta_{A}^{(r)} \gets {\mathrm{det}(R_{U, AB}^{(r)})}$ \Comment*{sharpness}

$e_{A} \gets \mathrm{sgn} (\mathrm{det}(R_{U, AB})) (-1)^{|A \cap X|} \prod_{k\in B'} q^{(r)}_{j,k} $ \Comment*{step (iv) from Section \ref{sec:Genprotocol}}

 $\hat{\gamma}_A^{(r)} \gets \frac{1}{\eta_{A}^{(r)}} e_{A} $;  
} 
 \Else{$\hat{\gamma}_A^{(r)} \gets 0$ \Comment*{$\gamma_A$ not estimated from this round}} 
}
$H^{(r)} \gets \sum_A\alpha_A^{(r)}h_A\hat\gamma_A^{(r)}$;
}
$H_j \gets \sum_{r=1} ^4 H^{(r)}$;
}

$\hat{\bar{H}} \gets \frac{1}{s} \sum_{j=1}^s H_j$;

$\mathrm{Var}(\hat{\bar{H}}) \gets \frac{1}{s-1} \sum_{j=1}^s (H_j - \hat{\bar{H}})^2$;
 
\caption{Postprocessing of samples from Algorithm \ref{alg:sampling} for estimation of Hamiltonian}
\end{algorithm}
}

We simulate the strategy for physical Hamiltonians from Section \ref{sec:physical_hamiltonians} for random fermionic Gaussian states, which is classically efficient \cite{terhal02, Bravyi_2012}, and we numerically validate the accuracy of the variance formula \eqref{eq:molecular_variance} for Hamiltonian estimation. The simulation yields energy estimates that agree with analytical values, up to statistical precision. For each molecular Hamiltonian we draw a different random Gaussian state and simulate our joint measurement strategy using $\mathcal{O}(10^5)$ rounds. The energies of each state are of the order of 1, and the values from analytical calculations and numerical estimations are in agreement up to at least one decimal place. This is orders of magnitude smaller than the variances, which are of orders comparable to the ground states in Table \ref{tab:benchmark} ($10-10^3$).

    Above in Algorithm \ref{alg:sampling} we present the procedure of obtaining the samples for the measurement scheme from classical simulation and in Algorithm \ref{alg:postprocessing} the postprocessing of the samples for Hamiltonian estimation. For the classical simulation we use the package FLOYao.jl \cite{bosse_2022} (see documentation of the package and \cite{Melo_2013} for an accessible introduction to the topic).

\section{Shallow circuits for \emph{vertical} FLO unitaries in 2D layouts under the JW encoding}\label{app:decomp}

To implement our joint measurement on a 2D grid of qubits we will use a horizontal JW encoding of the Majorana modes (described below). The non-trivial aspect of our implementation is the FLO transformation $U^{(r)}$ which consists of the product of three unitaries $U_{\mathrm{sup}}$, $U_{\mathrm{pair}}$ and $U^{(r)}_{\mathrm{resh}}$, introduced in Sec. \ref{sec:constructionOFunitaries}.
As we shall see, our choice of horizontal JW encoding ensures it is straightforward to implement horizontal transformations such as $U_{\mathrm{sup}}$ in gate depth $\mathcal{O}(N^{1/2})$ with $\mathcal{O}(N^{3/2})$ nearest-neighbor two-qubit gates. On the other hand,  implementing vertical transformations such as $U_{\mathrm{pair}}$ and $U^{(r)}_{\mathrm{resh}}$ in a horizontal encoding (with the same gate depth and counts) is less straightforward.

In this section we will show that a class of unitaries we call JW-vertical FLO transformations, which include $U_{\mathrm{pair}}$ and $U^{(r)}_{\mathrm{resh}}$, can be implemented in the horizontal JW encoding with the required gate depth and gate count. The basic idea, as demonstrated in the proof of Prop. \ref{prop:compilation}, is to find a suitable map between the relevant circuit in the vertical JW encoding (for which the implementation is trivial) and its realization in the horizontal JW encoding.

\subsection{Majorana mode labellings}
\begin{figure*}[h!]
\centering
\includegraphics[width=9cm]
{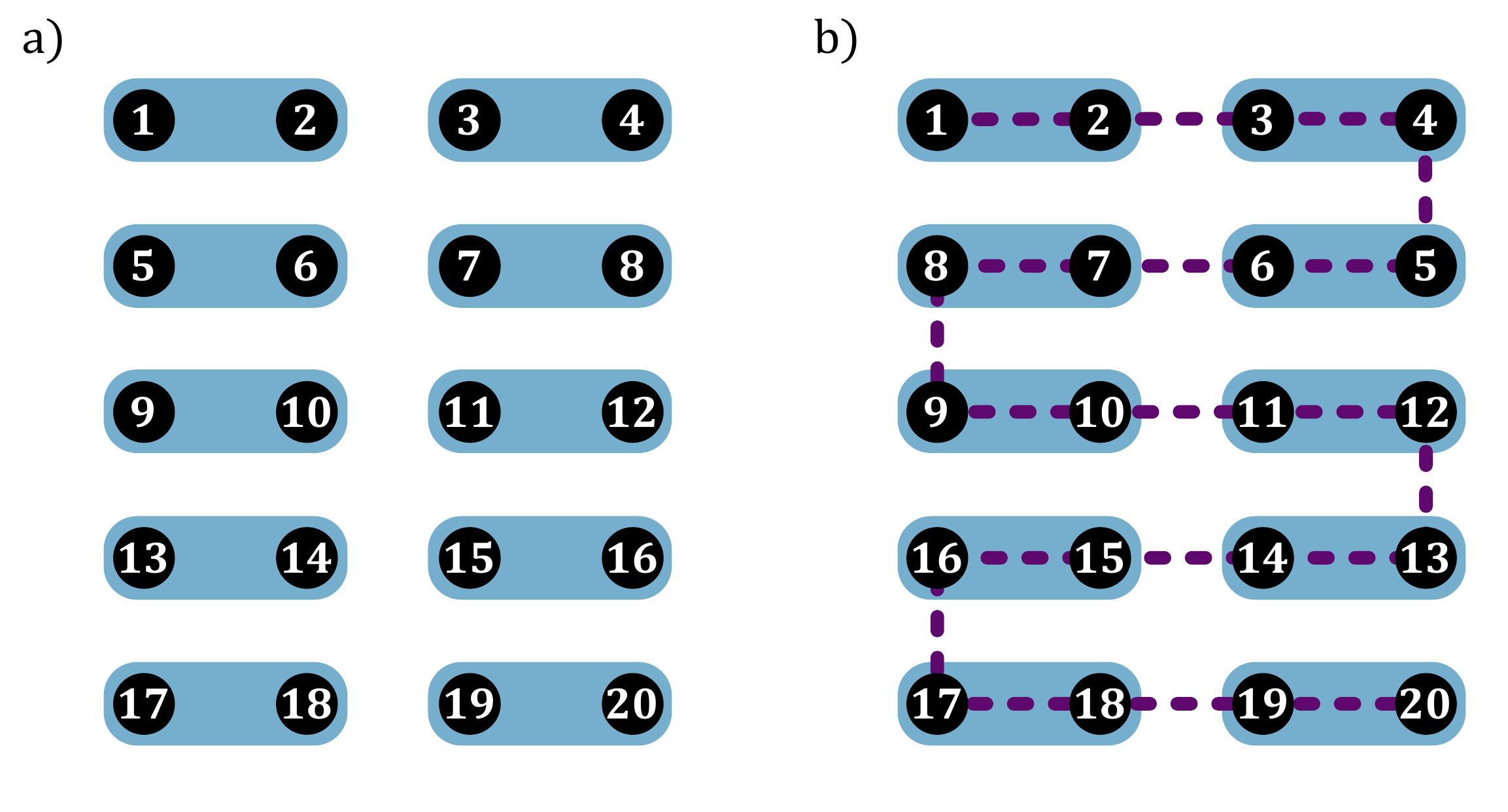}
	\caption {\textbf{Two different ways of labelling $2N= 2l_x l_y$ Majorana modes in a $l_x\times l_y$ rectangular qubit lattice for $l_x=2,l_y=5$.} \\ Black dots represent Majorana modes while qubits are represented by blue rounded rectangles. Part a) is the standard labelling (left to right in each row) used througout the main text. Part b) is a snake-like labelling used in \cite{Boxio18,Montanaro2020} and in this section.}
   \label{fig:2Dlabeling}
\end{figure*}

In what follows we will consider a \textit{snake} labelling of Majorana modes (see Fig. \ref{fig:2Dlabeling}b) in a 2D layout. In particular, modes are arranged along a "snake" which traverses subsequent rows from left to right and then right to left. This differs slightly from the labelling used in the main text (see Fig. \ref{fig:2Dlabeling}a), where the modes were arranged from left to right in each row.

\textbf{Remark.} However, both labellings can be related by a \emph{horizontal} transformation of the modes $P=\oplus_{\alpha -\text{even}} O_\alpha$, which reverses the order of Majorana modes in even rows labelled by $\alpha$ (for the example with $l_x=2,l_y=5$, we have that $O_2$ realizes $\sigma_2=(5,8)(6,7)$ and $O_4$ realizes $\sigma_4=(13,16)(14,15)$, where we used $(\cdot,\cdot)$ to denote a transposition. Since $P$ does not mix Majorana operators in different rows, the corresponding unitary transformation $V_P$ can be implemented in depth $d_p=\mathcal{O}(l_x)$ and using $n_P=\mathcal{O}(l_x^2l_y)$ nearest neighbor two qubit gates in the horizontal JW encoding (described below). Consequently, the (asymptotic) bounds on the depth and gate-count for implementing $U_{\mathrm{pair}}$ and $U_{\mathrm{resh}}^{(r)}$ are identical in both labellings of modes. 

\subsection{Qubit labellings, encodings and transformations}

\begin{figure*}[h!]
\centering
\includegraphics[width=14cm]
{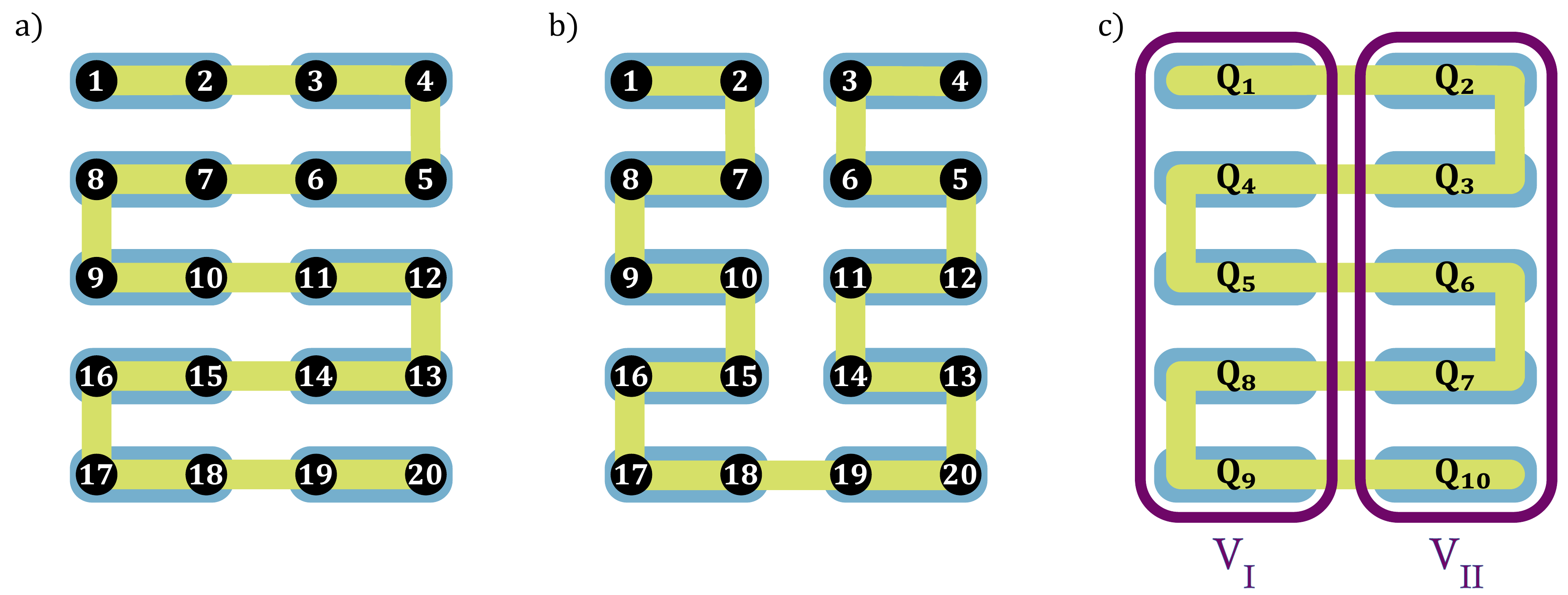}
	\caption {\textbf{Example of qubit labellings, encodings and transformations used in this section, for $2N= 2l_x l_y$ Majorana modes encoded in $l_x\times l_y$ rectangular qubit lattice for $l_x=2,l_y=5$.} \\ Dots in the figures represent Majorana modes while qubits are represented by blue rounded rectangles. Parts a) and b) present \emph{horizontal JW} and \emph{vertical JW} qubit encodings, respectively. These encodings differ by the locality structure of the encoded Majorana modes, indicated by the green "snake". In the horizontal (vertical) encoding, neighboring modes are arranged along a "snake" which traverses subsequent qubit rows horizontally (columns vertically). Part c) represents a \emph{snake} labelling of qubits ($Q_1 - Q_{10}$), as well as two \emph{JW-vertical} (see \eqref{eq:JW-verticalO} for definition) transformations $V_{I}, V_{II}$, i.e. transformations which only act on modes contained in one column of qubits.}
   \label{fig:vertical_horizontal_encoding}
\end{figure*}
In the following we will use a \emph{horizontal JW} fermion-to-qubit encoding where a rectangular $l_x \times l_y$ grid of qubits encodes $2N=2l_x l_y$ Majorana modes via the Jordan-Wigner transformation. In this encoding (see Fig. \ref{fig:vertical_horizontal_encoding}a), adjacent qubits encode adjacent pairs of Majorana modes with a locality structure consistent with the "snake" labelling in Fig. \ref{fig:2Dlabeling}. Nearest-neigbor qubit gates are those between qubits neighboring each other in either row or column of the 2D layout. In the horizontal JW encoding, any horizontal FLO transformation (recall Eq. \eqref{eq:vertical_horizontalO}) can be implemented in depth $\mathcal{O}(N^{1/2})$ and using $\mathcal{O}(N^{3/2})$ nearest neighbor two-qubit gates, each acting only on qubit pairs with both qubits in the same row of the 2D layout. However, a vertical FLO cannot trivially be implemented with the same gate complexity. For example, to implement a transformation between modes 1 and 8 in Fig. \ref{fig:vertical_horizontal_encoding}a, we need to take into account the parities incurred by traversing modes 2-7; it won't suffice to just use nearest-neighbor qubit gates in the first column of the qubit layout. Conversely, implementing a vertical FLO transformation is easy in the \emph{vertical JW} qubit encoding (see Fig. \ref{fig:vertical_horizontal_encoding}b). More generally, in that encoding it is easy to implement a JW-vertical transformation, which is defined as follows:

\begin{def*}(JW-vertical FLO transformations) Consider a snake-like labelling of $2N=2 l_x l_y$ Majorana modes in an $l_x\times l_y$ array of qubits (see Fig. \ref{fig:2Dlabeling}b, Fig. \ref{fig:vertical_horizontal_encoding}a). Let $C_i$ ($i\in[2l_x]$) denote the subset of labels of Majorana modes belonging to the i'th column of the rectangular arrangement. Furthermore, let $\mathcal{P}_\beta= C_{2\beta-1}\cup C_{2\beta}$ ($\beta\in[l_x]$) and $W_i=\mathrm{span}_\mathbb{R}\lbrace\{\ket{j}\ |\  j\in \mathcal{P}_i\rbrace\} $. We say that $V\in\mathrm{FLO}(N)$ is JW-vertical if the corresponding orthogonal transformation has the form 
\begin{equation}\label{eq:JW-verticalO}
    O= \bigoplus_{\beta=1}^{l_x} O_\beta, 
\end{equation}
where $O_\beta$ is an orthogonal transformation acting on subspace $W_\beta$.
\end{def*}
See Fig. \ref{fig:vertical_horizontal_encoding}c for a depiction of JW-vertical FLO transformations for $l_x=2,l_y=5$. Note that any transformation which is vertical in the 2D Majorana layout, as defined in Eq. \eqref{eq:vertical_horizontalO} of the main text, is also JW-vertical. The converse is not true (because a JW-vertical transformation can mix modes between two adjacent columns in the 2D Majorana layout).

In what follows we will construct a transformation $\Gamma$ which, in a system under the horizontal JW encoding, implements a JW-vertical transformation (see Fig. \ref{fig:vertical_horizontal_encoding}c) with the desired gate depth and count. This is achieved by conjugating with $\Gamma$ a circuit which would implement the desired JW-vertical transformation in the vertical JW encoding.  

\subsection{Compilation of vertical FLOs under horizontal JW encoding}
The compilation procedure presented in the proof below can likely be made more efficient by taking into account the structure of the unitaries  $U_{\mathrm{pair}}$, $U_{\mathrm{resh}}^{(r)}$, as well as a possible further improvement in the compilation of general JW-vertical FLO transformations. 

\begin{prop}\label{prop:compilation}
    Let $V\in\mathrm{FLO}(N)$ be a JW-vertical FLO transformation on a $l_x \times l_y$ array of qubits. Then $V$ can be implemented in depth $d_V=\mathcal{O}(l_x)+\mathcal{O}(l_y)$ and using $n_V =  \mathcal{O}(l_y^2 l_x)$ nearest neighbor two qubit gates, assuming one extra layer of qubits is available per row.
\end{prop}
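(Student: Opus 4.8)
The plan is to realize $V$ in the horizontal JW encoding by conjugating, with a change-of-encoding unitary $\Gamma$, the circuit that implements the \emph{same} FLO transformation in the vertical JW encoding, where it becomes cheap. First I would reduce to the snake labelling of Fig.~\ref{fig:2Dlabeling}b: by the Remark in Appendix~\ref{app:decomp} the standard and snake labellings are related by a purely horizontal transformation $V_P$ of the same asymptotic depth and gate count, so we may work throughout in the snake labelling. Let $\Gamma\in\mathrm{FLO}(N)$ be the FLO transformation that reorders the $2N=2l_x l_y$ modes from the horizontal snake order to the vertical snake order; equivalently $\Gamma$ intertwines the two JW encodings, so that a JW-vertical $V$ written in the horizontal encoding equals $\Gamma^\dagger \mathcal{C}\,\Gamma$, where $\mathcal{C}$ is the realization of $V$ in the vertical encoding. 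Since the depth and gate count of a conjugation are bounded by the sums of those of the factors, it suffices to bound the costs of $\mathcal{C}$ and $\Gamma$ separately.

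The easy ingredient is $\mathcal{C}$. By the defining property \eqref{eq:JW-verticalO}, $V$ acts block-diagonally on the double-columns $\mathcal{P}_\beta$ ($\beta\in[l_x]$), each of which is a contiguous block of $2l_y$ Majorana modes along the vertical snake. Hence $\mathcal{C}=\prod_{\beta=1}^{l_x}\mathcal{C}_\beta$, with $\mathcal{C}_\beta$ an arbitrary $O(2l_y)$ transformation on the $2l_y$ consecutive modes of the $\beta$-th column. Using the brickwork decomposition of a general FLO into $\mathcal{O}(l_y)$ layers of nearest-neighbor gates $\exp(\mathrm{i}\varphi\,\gamma_{i,i+1})$, totalling $\mathcal{O}(l_y^2)$ gates \cite{oszmaniec22,Boxio18} (plus an $\mathcal{O}(1)$ single-Majorana factor handling the determinant-$(-1)$, parity-changing case), and running all $l_x$ disjoint columns in parallel, $\mathcal{C}$ is implemented in depth $\mathcal{O}(l_y)$ with $\mathcal{O}(l_x l_y^2)$ nearest-neighbor two-qubit gates.

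The main work, and the principal obstacle, is $\Gamma$. It is a number-preserving (indeed parity-preserving) permutation of modes realizing the grid ``transpose'' between the two snakes, yet its realization in the horizontal encoding still encounters long Jordan-Wigner strings whenever a mode is moved between rows. I would build it as a fermionic-swap network using the one auxiliary qubit per row to carry the accumulated row parity, exactly in the spirit of \cite{Boxio18,Montanaro2020}, so that each inter-row fermionic swap becomes a constant-weight operation mediated by the ancilla rather than a string through an entire row. Routing every mode to its vertical-snake position is then arranged in $\mathcal{O}(l_x)+\mathcal{O}(l_y)$ layers using $\mathcal{O}(l_x l_y^2)$ gates, and the ancillas are disentangled and returned to their initial product state at the end of $\Gamma$, so that the subsequent $\mathcal{C}$ (which touches only system qubits) composes correctly.

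It is worth noting that the reduction via $\Gamma$ is precisely what upgrades the number-preserving treatment of \cite{Boxio18} to a \emph{general} FLO $V$: parity-preservation is required only of the reordering $\Gamma$, whereas $\mathcal{C}$ is an unrestricted $O(2l_y)$ transformation on contiguous modes and is handled directly. Combining the three pieces gives total depth $2\,d_\Gamma+d_{\mathcal{C}}=\mathcal{O}(l_x)+\mathcal{O}(l_y)$ and total gate count $\mathcal{O}(l_x l_y^2)=n_V$, as claimed. The step I expect to be most delicate is the explicit ancilla-mediated routing for $\Gamma$: showing that the transpose permutation can be carried out in depth linear in $l_x$ and $l_y$ without ever reintroducing a full-row Jordan-Wigner string, while keeping the gate count within the $\mathcal{O}(l_x l_y^2)$ budget and certifying that the parity ancillas are restored.
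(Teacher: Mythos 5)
Your overall architecture --- conjugate a cheap ``vertical-encoding'' circuit by a change-of-encoding unitary $\Gamma$, and handle the middle circuit as $l_x$ parallel $O(2l_y)$ blocks via brickwork decompositions of cost $\mathcal{O}(l_y)$ depth and $\mathcal{O}(l_x l_y^2)$ gates --- matches the paper's, and your treatment of the middle circuit $\mathcal{C}$ is essentially identical to the paper's treatment of $V_v$. But your $\Gamma$ is a fundamentally different object from the paper's, and the difference is exactly where your argument has a gap. You take $\Gamma$ to be a fermionic mode-routing network (a FLO permutation that physically transports every mode to its vertical-snake position). The paper's $\Gamma$ is \emph{not} a FLO and moves no modes at all: it is a real diagonal unitary in the computational basis, built from ancilla-assisted parity accumulation and C-Z gates, chosen so that $\Gamma Z_i \Gamma^\dagger = Z_i$ and $\Gamma A \Gamma^\dagger = A\prod_{\alpha<i<\beta}Z_i$ for every local coupling $A\in\{X_\alpha X_\beta, X_\alpha Y_\beta, Y_\alpha X_\beta, Y_\alpha Y_\beta\}$ between column-adjacent qubits. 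Conjugation by this $\Gamma$ dresses each naive local gate with precisely the Jordan--Wigner string it needs in the horizontal encoding, so the ``vertical-encoding'' circuit can be run verbatim on the qubits in place.

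The gap is the step you yourself flag as delicate: you assert, but do not construct, a routing network realizing the snake transpose in depth $\mathcal{O}(l_x)+\mathcal{O}(l_y)$ with $\mathcal{O}(l_x l_y^2)$ gates. Fermionic swaps between modes adjacent in the JW \emph{linear} order are 2-local, but a routing built only from those needs depth $\Theta(l_x l_y)$ for a transpose-type permutation. To get the claimed depth you must use vertical (grid-adjacent but linearly distant) fermionic swaps, each of which carries an $\mathcal{O}(l_y)$-weight parity string; the ancilla-per-row trick can compress one such swap to constant weight only if the ancilla holds the correct intervening parity, and every executed swap changes the occupations and hence invalidates the parities needed by subsequent swap layers. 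Recomputing parities between layers costs $\mathcal{O}(l_y)$ depth per layer and destroys the bound. This staleness problem is precisely what the paper's construction sidesteps: since its $\Gamma$ never moves modes, the row parities are swept into the ancillas once, consumed by C-Z gates, and uncomputed, giving $d_\Gamma=\mathcal{O}(l_x)+\mathcal{O}(l_y)$ and $n_\Gamma=\mathcal{O}(l_x l_y)$ unconditionally. Unless you can exhibit a fermionic routing schedule that circumvents the parity-maintenance issue, your proof of the depth bound is incomplete at its most critical point; I would recommend replacing the mode-permutation $\Gamma$ with a string-dressing diagonal $\Gamma$ of the kind the paper builds.
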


The above proposition shows the desired depths and gate-counts necessary to implement $U_{\mathrm{pair}}$ and $U_{\mathrm{resh}}^{(r)}$ in a rectangular array of qubits. Indeed both unitaries are JW-vertical FLO transformations and for $l_x=\mathcal{O}(N^{1/2})$, $l_y=\mathcal{O}(N^{1/2})$ we have gate depth $d_V=\mathcal{O}(N^{1/2})$ and nearest neighbor gate-count $n_V= \mathcal{O}(N^{3/2})$. 

\begin{proof}[Proof of Proposition \ref{prop:compilation}]
    We will follow the proof strategy presented in \cite{Boxio18} for the case of JW-vertical \emph{passive} FLO transformations and extend it to \emph{active} ones. The idea is to map, via a suitable unitary transformation $\Gamma$, between the circuits implementing the desired JW-vertical FLO in the horizontal and vertical JW encodings -- see Fig. \ref{fig:vertical_horizontal_encoding}a and b for an example. 

   We first note that in the vertical encoding, an implementation of an arbitrary JW-vertical $V$ can be achieved in $\mathcal{O}(l_y)$ depth and by using $\mathcal{O}(l_x l_y^2)$ gates. This is because each of the $l_x$ transformations $O_\beta$ in the decomposition \eqref{eq:JW-verticalO} can be realized by a product of $\mathcal{O}(l_y^2)$ "nearest neighbor"  rotations,  $\exp(i \varphi \gamma_a \gamma_b$),  where $a,b$ are indices of neighboring Majorana modes (see e.g. \cite{Boxio18} or \cite{oszmaniec22} for the detailed proof). In the vertical encoding, such transformations are realized by single qubit Pauli $Z$ rotations, $\exp(i\varphi Z_i)$, $i\in[N]$ (they realize transformations $\exp(i\varphi \gamma_{2i-1} \gamma_{2i})$) and local gates generated by the following local couplings 
\begin{equation}
    \mathcal{C}_{\alpha \beta} = \lbrace X_\alpha X_\beta, X_\alpha Y_\beta , Y_\alpha X_\beta , Y_\alpha Y_\beta  \rbrace\ ,
\end{equation}
where $\alpha,\beta$ are neighboring qubits that belong to the same column in the qubit arrangement. Each of the transformations $O_\beta$, which act on disjoint subspaces, requires gate depth $\mathcal{O}(l_y)$ and can be realized in parallel, yielding the claimed gate depth.

In the horizontal encoding, single qubit Pauli rotations still generate $\exp(i\varphi \gamma_{2i-1} \gamma_{2i})$, while generators from $ \mathcal{C}_{\alpha \beta} $ have to be multiplied by an additional parity term $\prod_{\alpha<i<\beta} Z_i$ in order to generate the right fermionic transformation. The above discussion shows that if we can find $\Gamma$ such that for all $i\in[N]$ we have $\Gamma Z_i \Gamma^{\dagger}=Z_i$ and, furthermore, for all neighboring qubits  $\alpha,\beta$ that belong to the same column, and for all $A\in  \mathcal{C}_{\alpha \beta} $, we have
\begin{equation} \label{eq:GammaCond1}
    \Gamma A\Gamma^{\dagger} = A \prod_{\alpha<i<\beta} Z_i\ ,
\end{equation}
then we have 
\begin{equation}\label{eq:transferGamma}
    \Gamma V_v \Gamma^{\dagger} = V\ ,
\end{equation}
where $V$  is the circuit realizing in the horizontal JW encoding a JW-vertical FLO transformation, which in the vertical JW encoding is realized by $V_v$.  In what follows we explicitly construct a transformation $\Gamma$ that satisfies the above conditions and can be realized in depth $d_\Gamma = \mathcal{O}(l_x)+\mathcal{O}(l_y)$, by using $n_\Gamma = \mathcal{O}(l_x l_y) $ nearest neighbor two-qubit gates, and a single auxiliary qubit per row. This, together with condition \eqref{eq:transferGamma}, gives the claimed estimates for the depth and gate counts required to implement $V$:
\begin{equation}
    d_V\leq d_\Gamma + d_{V_v}+ d_{\Gamma^\dagger} = \mathcal{O}(l_x)+\mathcal{O}(l_y)\ ,
\end{equation}
\begin{equation}
    n_V\leq n_\Gamma + n_{V_v}+ n_{\Gamma^\dagger} = \mathcal{O}(l_y l_x) + \mathcal{O}(l_y^2 l_x) = \mathcal{O}(l_y^2 l_x) . 
\end{equation}

\begin{figure*}[h]

\centering
\includegraphics[width=18cm]
{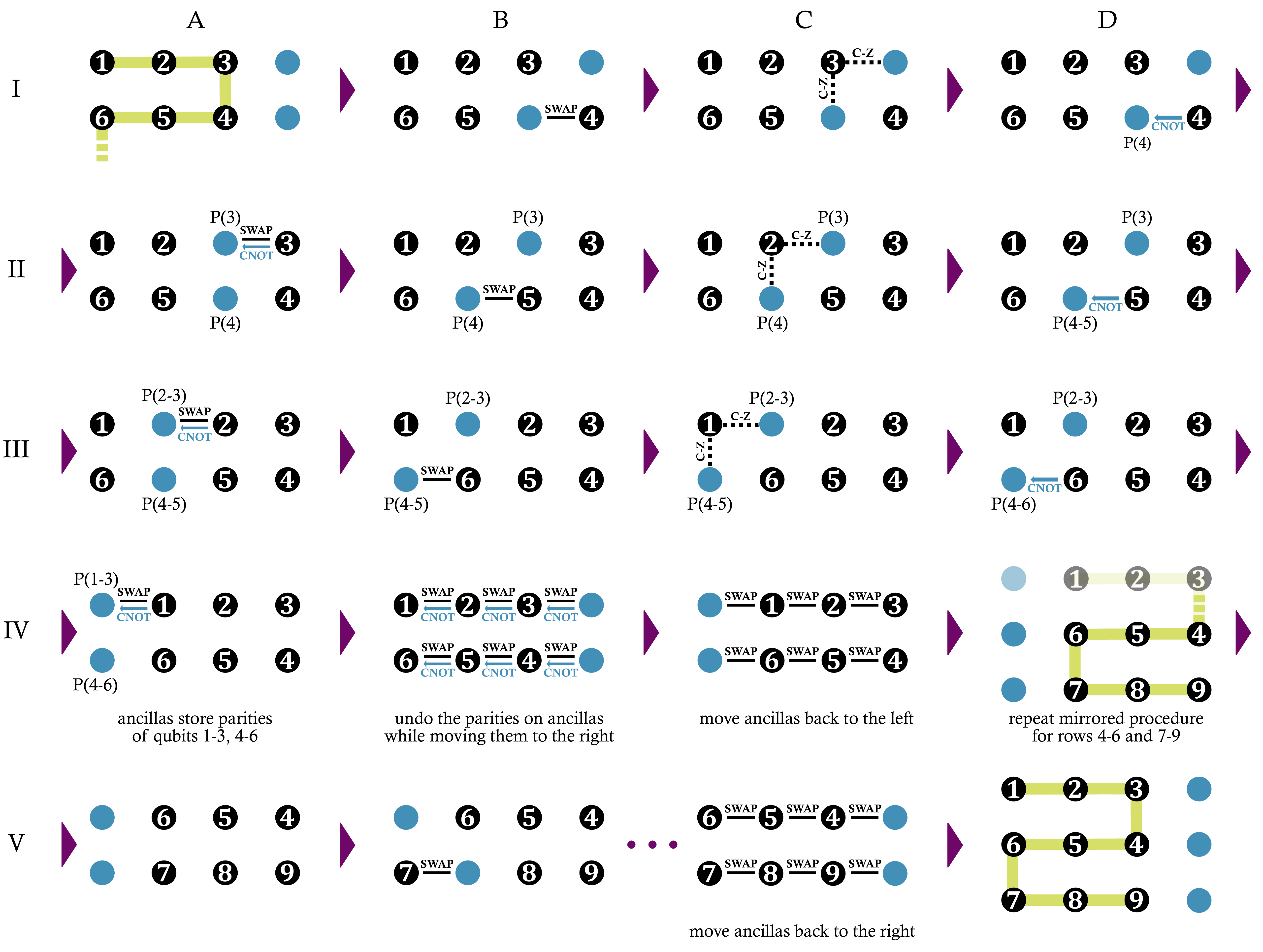}
	\caption {\textbf{A graphical presentation of a circuit implementing a unitary $\Gamma$ for $l_x=3$, $l_y=3$. } \\
 Black and blue dots represent the system and ancillary qubits respectively. The subsequent diagrams depict subsequent stages of the computation. A detailed explanation of different parts of the circuit is given in the last part of the proof of Proposition \ref{prop:compilation}.  Specific steps in the figure are referenced in text by their row (I-V) and column (A-D). The qubits $i-j$, whose parities are stored in a given ancillary qubit, are indicated by annotation P($i-j$) placed below/above the ancillary qubit.}
   \label{fig:vertical_FLO_implementation}
\end{figure*}

Before presenting a circuit that implements the desired unitary $\Gamma$, we first make some general observations which will be useful in later analysis. We first impose the condition that $\Gamma$ be diagonal in the computational basis because then $\Gamma Z_i \Gamma^\dagger = Z_i$ and it suffices to check the condition  from Eq. \eqref{eq:GammaCond1}. This condition is equivalent to requiring that for all neighboring qubits $\alpha,\beta$ belonging to the same column, for all $A\in  \mathcal{C}_{\alpha \beta} $, and for arbitrary computational basis vectors $\ket{s},\ket{s'}$ such that $\bra{s} A \ket{s'}\neq 0$, we have
\begin{equation}\label{eq:eigenvaluesGamma}
    \bar{\lambda}_s \lambda_{s'} = (-1)^{\sum_{\alpha<i<\beta} s_i}=(-1)^{\sum_{\alpha<i<\beta} s'_i}\ , 
\end{equation}
where $s=s_1s_2\ldots s_N\ ,\ s'=s'_1s'_2\ldots s'_N$ are bitstrings corresponding to states $\ket{s},\ket{s'}$, $s_i, s_i'$ describe state 0 or 1 on $i$-th qubit, and $\lambda_s,\lambda_{s'}$ are the corresponding eigenvalues of $\Gamma$, which we impose to be real. For every $\mathcal{C}_{\alpha\beta}$, the characterization of bitstrings $s,s'$ such that $\bra{s} A \ket{s'}\neq 0$ is straightforward since $A\in \mathcal{C}_{\alpha \beta} $ act as simultaneous bitflips on the computational basis states (up to an overall phase). Consequently we get that bitstrings $s,s'$ for which \eqref{eq:eigenvaluesGamma} has to be satisfied are of the following form (for any pair of qubits $\alpha,\beta$ which belong to the same column and are neighbors)
\begin{equation}\label{eq:bitstringsSUBSET}
    s_\alpha = \neg s'_\alpha\ ,\ s_\beta = \neg s'_\beta \ ,\ s_i = s'_i\ \,\,\text{for } i\notin\{\alpha,\beta\}\ .
\end{equation}

In what follows, we present a concrete construction of a circuit implementing the desired $\Gamma$ for $l_x=3$ and $l_y=3$. The steps are labelled in correspondence with Fig. \ref{fig:vertical_FLO_implementation}, where the circuit diagrams are indexed by rows (I-V) and columns (A-D). The construction sketched below is a slight modification  of the circuit presented in \cite{Boxio18} for the case of passive fermionic linear optics.

We define a \emph{right(left)-connected} qubit pair in the horizontal JW encoding -- see Fig. \ref{fig:vertical_horizontal_encoding}c -- as a pair of qubits in neigboring rows for which the snake segment connecting the qubits in the pair has a vertical part on the right (left) side of the qubits' respective rows.  E.g. in Fig. \ref{fig:2Dlabeling}a pairs (1,6), (2, 5), (3, 4) are right-connected and pairs (6, 7), (5, 8), (4, 9) are left-connected.

 \begin{itemize}
     \item[AI-AIV:] In these steps we cover the case of right-connected qubit pairs in qubit rows 1-3 and 4-6, i.e. qubit pairs (1,6), (2, 5), (3, 4). First, the ancilla qubits are moved horizontally to the left through the layout, acquiring combined parities $s_i$ of qubits $a-b$: $P(a-b)=(-1)^{\sum_{i=a}^{b} s_i}$ to the right of them. Then, via the suitably orchestrated application of C-Z gates between system qubits and ancillary qubits, the right relative phases (eigenvalues) are imprinted on the computational basis states relevant for specific neighboring qubits $C_{\alpha\beta}$ (i.e. the states satisfying conditions in \eqref{eq:bitstringsSUBSET}).
     \begin{itemize}
         \item[AI:]  We use auxiliary qubits that are initialized in the state $\ket{0}$ and placed to the right of the system qubits.
         \item[BI:] Ancilla qubit in row 2 is moved to the left, by application of a SWAP gate.
         \item[CI:] C-Z gates are applied between: (qubit 3, ancilla in row 1) and (qubit 3, ancilla in row 2). They are responsible for implementing the correct relative phases between states $\ket{s},\ket{s'}$ satisfying \eqref{eq:bitstringsSUBSET} for qubit pair (3, 4).
         \item[DI:] Ancilla qubit in row 2 acquires parity of qubit 4, by application of CNOT gate.
         \item[AII:] Ancilla qubit in row 1 is moved to the left by application of SWAP gate and acquires parity of qubit 3, by application of CNOT gate.
         \item[BII-AIII:] Steps analogous to BI-AII are repeated, moving ancillas further to the left. The C-Z gates applied in CII are responsible for implementing the correct relative phases between states $\ket{s},\ket{s'}$ satisfying \eqref{eq:bitstringsSUBSET} for qubit pair (2, 5). Ancilla in row 1 now stores parities of qubits 2-3, ancilla in row 2 now stores parities of qubits 4-5.
          \item[BIII-AIV:] Steps analogous to BI-AII are repeated, moving ancillas further to the left.  The C-Z gates applied in CIII are responsible for implementing the correct relative phases between states $\ket{s},\ket{s'}$ satisfying \eqref{eq:bitstringsSUBSET} for qubit pair (1, 6). Ancilla in row 1 now stores parities of qubits 1-3, ancilla in row 2 now stores parities of qubits 4-6. All auxiliary qubits are located to the left of the system qubits, all the phases of eigenstates $\ket{s},\ket{s'}$ relevant for right-closed neighboring qubits satisfy \eqref{eq:eigenvaluesGamma}.
         \item[BIV:] Ancilla qubits are moved back to the right while their internal states are being uncomputed by application of C-NOTs and SWAPs applied in reverse order, relative to steps BI - AIV .
         \item[CIV:] Ancilla qubits are moved to the left (without any coupling to the system qubits), by application of SWAPs.
     \end{itemize}
     \item[DIV, AV, BV, ..., DV: ] In these steps we repeat an analogous procedure in order to cover left-connected qubit pairs ((6, 7), (5, 8), (4,9)). The whole process is repeated by mirroring the procedure previously applied in steps AI-AIV to rows 4-6, 7-9. This ensures that the relative phases $\lambda_s,\lambda_{s'}$  between eigenvalues of eigenstates $\ket{s},\ket{s'}$ relevant to connections 6-7, 5-8, 4-9 (in the sense of Eq. \eqref{eq:bitstringsSUBSET}) are correctly assigned.
     \begin{itemize}
        \item[AV: ] We begin with a mirrored setup from step AI. Auxiliary qubits are initialized in the state $\ket{0}$ and placed to the left of the system qubits.
        \item[BV:] Analogously to step BI, ancilla qubit in row 4 is moved to the right, by application of a SWAP gate.
        \item[...:] Subsequent steps analogous to CI-BIV are performed.
         \item[CV:] Ancilla qubits are moved to the right (without any coupling to the system qubits), by application of SWAPs. Each ancilla ends up in its initial state, $\ket{0}$.
         \item[DV:]  We end up with the initial setup, having effectively effectively realized a real, diagonal, and unitary transformation $\Gamma$ on the system qubits.
     \end{itemize}
 \end{itemize}

The above procedure can be parallelized by operating simultaneously on disjoint pairs of rows in both stages of the computation. Furthermore, the same procedure can be easily generalized to arbitrary $l_x$ and $l_y$. After a simple count, we obtain the claimed bounds on the depth and two-qubit gate counts, namely $d_\Gamma = \mathcal{O}(l_x)+\mathcal{O}(l_y)$ and $n_\Gamma = \mathcal{O}(l_x l_y)$.
\end{proof}
\end{document}